\documentclass[a4paper,UKenglish,cleveref, autoref, thm-restate, nolineno]{socg-lipics-v2021}
%This is a template for producing LIPIcs articles. 
%See lipics-v2021-authors-guidelines.pdf for further information.
%for A4 paper format use option "a4paper", for US-letter use option "letterpaper"
%for british hyphenation rules use option "UKenglish", for american hyphenation 
%rules use option "USenglish"
%for section-numbered lemmas etc., use "numberwithinsect"
%for enabling cleveref support, use "cleveref"
%for enabling autoref support, use "autoref"
%for anonymousing the authors (e.g. for double-blind review), add "anonymous"
%for enabling thm-restate support, use "thm-restate"
%for enabling a two-column layout for the author/affilation part 
%(only applicable for > 6 authors), use "authorcolumns"
%for producing a PDF according the PDF/A standard, add "pdfa"

%\pdfoutput=1 %uncomment to ensure pdflatex processing 
%(mandatatory e.g. to submit to arXiv)
\hideLIPIcs  %uncomment to remove references to LIPIcs series (logo, DOI, ...), 
%e.g. when preparing a pre-final version to be uploaded to arXiv or another 
%public repository

%\graphicspath{{./graphics/}}%helpful if your graphic files are in 
%another directory

\bibliographystyle{plainurl}% the mandatory bibstyle

%%%%%%%%%%%%%%%%%%%%%%%%%%%%%%%%%% PACKAGES %%%%%%%%%%%%%%%%%%%%%%%%%%%%%%%%%%%

\usepackage{amsmath}
\usepackage{amsfonts}
\usepackage{amsthm}
\usepackage{amssymb}
\usepackage{dsfont}
\usepackage{enumitem}
\usepackage{graphicx}
\usepackage{faktor}
%\usepackage{bm}
%\usepackage{subfigure}
%\usepackage{tikz-cd}
%\usepackage{mathabx}
%\usepackage{multirow}
%\usepackage{setspace}
%\usepackage{algorithmic}
%\usepackage{algorithm}
%\usepackage{epsf}
%\usepackage{paralist}
%\usepackage{float}
%\usepackage{yfonts}
%\usepackage{cite}
%\usepackage{hyperref}
%\usepackage{fullpage}
%\usepackage{textcomp}
%\usepackage[pagewise]{lineno}
%\linenumbers

%%%%%%%%%%%%%%%%%%%%%%%%%%%%%%%%%%% STRUCTURE %%%%%%%%%%%%%%%%%%%%%%%%%%%%%%%%%%

%\newtheorem{observation}{Observation}
%\newtheorem{lemma}{Lemma}
%\newtheorem{claim}{Claim}

%\newtheorem{theorem}{Theorem}
%\newtheorem{proposition}{Proposition}
%\newtheorem{corollary}{Corollary}
%\newtheorem{Conjecture}{Conjecture}
%\newtheorem{question}{Question}
%\newtheorem{algorithm}[theorem]{Algorithm}

%\theoremstyle{definition}
%\newtheorem{definition}{Definition}[section]

%\theoremstyle{remark}
%\newtheorem*{remark}{Remark}

%\renewcommand{\arraystretch}{1.3}
%\setlength{\tabcolsep}{0.5em}

%\renewcommand{\labelitemi}{--}
%\setlist{itemsep=0mm}

%%%%%%%%%%%%%%%%%%%%%%%%%%%%%%%%%% COMMENTS %%%%%%%%%%%%%%%%%%%%%%%%%%%%%%%%%%%

\usepackage{xcolor}
\usepackage{verbatim}
%\usepackage[colorinlistoftodos]{todonotes}
%\usepackage{comment}

%%%%%%%%%%%%%%%%%%%%%%%%%%%%%%%%%%% MACROS %%%%%%%%%%%%%%%%%%%%%%%%%%%%%%%%%%%%

\newcommand{\Z}{\mathbb{Z}}					% integers
\newcommand{\R}{\mathbb{R}}					% real numbers
\newcommand{\N}{\mathbb{N}}					% natural numbers

\newcommand{\cpx}{\mathbb{K}}               % abstract simplicial complex
               % other abstract simplicial complex
              % geometric simplicial complex
             % other geometric simplicial complex
\newcommand{\sls}[1]{\cpx_{\leq #1}}        % simplicial sub-level set
\newcommand{\spx}{\sigma}                   % default simplex
                    % second simplex
                    % triangle

                         % Reeb graph

\newcommand{\mf}{M}                         % manifold

\newcommand{\wf}{w}                         % weight function
\newcommand{\gf}{\omega}                    % generalization of weight function
\newcommand{\bn}{b}                         % bottleneck norm

\newcommand{\C}{C}                          % chain group
                       % singular chain group
                         % boundary group
\newcommand{\Zy}{Z}                         % cycle group
\newcommand{\Hom}{H}                        % homology group
              % horizontal homology group
            % vertical homology group
                       % sub-space of the homology group

\newcommand{\ch}{c}                         % chain
                % horizontal chain
              % vertical chain
                         % boundary chain
\newcommand{\zy}{\zeta}                     % main cycle chain
                % horizontal main cycle chain
              % vertical main cycle chain
\newcommand{\ozy}{z}                        % secondary cycle chain
\newcommand{\hc}{h}                         % homology class
                % horizontal homology class
              % vertical homology class

\newcommand{\bo}{\partial}                  % boundary operator
                % identity map
                        % inclusion map
\newcommand{\Ld}{{\mathcal{L}}}             % Link diagram

\newcommand{\es}{\varnothing}				% empty set

%%%%%%%%%%%%%%%%%%%%%%%%%%%%%%%%%%%% TITLE %%%%%%%%%%%%%%%%%%%%%%%%%%%%%%%%%%%%

\title{
    On Complexity of Computing Bottleneck and 
    Lexicographic Optimal Cycles in a Homology Class
}
\date{}
\titlerunning{On Complexity of Computing Bottleneck and Lexicographic Optimal Cycles} %TODO optional, please use if title is longer than one line

\author
    {Erin Wolf Chambers}
    {Saint Louis University, Saint Louis, MO, USA 
    \and \url{https://cs.slu.edu/~chambers/}}
    {erin.chambers@slu.edu}
    {}
    {This author was funded in part by the National Science Foundation 
    through grants CCF-1614562, CCF-1907612, CCF-2106672, and DBI-1759807.}
% TODO mandatory, please use full name; only 1 author per \author macro; 
% first two parameters are mandatory, other parameters can be empty. 
% Please provide at least the name of the affiliation and the country. 
% The full address is optional.

\author
    {Salman Parsa}
    {University of Utah, Salt Lake City, UT, USA}
    {sparsa@sci.utah.edu}
    {}
    {This author was funded in part by the Saint Louis University Research Institute and 
    by NSF grant CCF-1614562.}

\author
    {Hannah Schreiber}
    {Saint Louis University, Saint Louis, MO, USA}
    {hannah.schreiber.k@gmail.com}
    {https://orcid.org/0000-0002-8564-415X}
    {This author was funded in part by the National Science Foundation 
    through grant DBI-1759807.}

\authorrunning{E.\,W. Chambers and S. Parsa and H. Schreiber} 
%TODO mandatory. First: Use abbreviated first/middle names. 
%Second (only in severe cases): Use first author plus 'et al.'

\Copyright{Erin W. Chambers and Salman Parsa and Hannah Schreiber} 
%TODO mandatory, please use full first names. LIPIcs license is "CC-BY";  
%http://creativecommons.org/licenses/by/3.0/

\ccsdesc[500]{Mathematics of computing~Algebraic topology}
\ccsdesc[500]{Mathematics of computing~Geometric topology}
\ccsdesc[500]{Theory of computation~Computational geometry}
%\ccsdesc[100]{\textcolor{red}{Replace ccsdesc macro with valid one}} 
%TODO mandatory: Please choose ACM 2012 classifications from 
%https://dl.acm.org/ccs/ccs_flat.cfm 

\keywords{computational topology, bottleneck optimal cycles, homology} 
%TODO mandatory; please add comma-separated list of keywords

%\category{} %optional, e.g. invited paper

%\relatedversion{A full version of this paper is available on the arxiv at \url{https://arxiv.org/abs/2112.02380}.} 
%optional, e.g. full version hosted on arXiv, HAL, or other respository/website
%\relatedversion{A full version of the paper is available at \url{...}.}

%\supplement{}
%optional, e.g. related research data, source code, ... 
%hosted on a repository like zenodo, figshare, GitHub, ...

%\funding{(Optional) general funding statement \dots}
%optional, to capture a funding statement, which applies to all authors. 
%Please enter author specific funding statements as fifth argument of 
%the \author macro.

%\acknowledgements{}%optional

%\nolinenumbers %uncomment to disable line numbering

%\hideLIPIcs  %uncomment to remove references to LIPIcs series (logo, DOI, ...), 
%e.g. when preparing a pre-final version to be uploaded to arXiv or another 
%public repository

%Editor-only macros:: begin (do not touch as author)%%%%%%%%%%%%%%%%%%%%%%%%%%
\EventEditors{Xavier Goaoc and Michael Kerber}
\EventNoEds{2}
\EventLongTitle{38th International Symposium on Computational Geometry (SoCG 2022)}
\EventShortTitle{SoCG 2022}
\EventAcronym{SoCG}
\EventYear{2022}
\EventDate{June 7--10, 2022}
\EventLocation{Berlin, Germany}
\EventLogo{socg-logo}
\SeriesVolume{224}
\ArticleNo{XX}
%%%%%%%%%%%%%%%%%%%%%%%%%%%%%%%%%%%%%%%%%%%%%%%%%%%%%%%%%%%%%%%%%%%%%%%%%%%%%%

\begin{document}
\maketitle

%%%%%%%%%%%%%%%%%%%%%%%%%%%%%%%%%% ABSTRACT %%%%%%%%%%%%%%%%%%%%%%%%%%%%%%%%%%%

\begin{abstract}
    Homology features of spaces which appear in applications, for instance 3D
    meshes, are among the most important topological properties of these objects.
    Given a non-trivial cycle in a homology class, we consider the problem of
    computing a representative in that homology class which is optimal. We study 
    two measures of optimality, namely, the lexicographic order of cycles (the
    lex-optimal cycle) and the bottleneck norm (a bottleneck-optimal cycle). 
    We give a simple algorithm for computing the lex-optimal cycle for a 1-homology
    class in a closed orientable surface. In contrast to this, our main result is that, in the case 
    of 3-manifolds of size $n^2$ in the Euclidean 3-space, the problem of finding 
    a bottleneck optimal cycle cannot be solved more efficiently than solving a 
    system of linear equations with an $n \times n$ sparse matrix. From this 
    reduction, we deduce several hardness results. Most notably, we show that for 
    3-manifolds given as a subset of the 3-space of size $n^2$, persistent 
    homology computations are at least as hard as rank computation (for sparse matrices) while ordinary homology computations can be done in $O(n^2 \log n)$ time. This is 
    the first such distinction between these two computations. Moreover, it 
    follows that the same disparity exists between the height persistent homology 
    computation and general sub-level set persistent homology computation for 
    simplicial complexes in the 3-space.
\end{abstract}

\section{Introduction}
%------------------------------------------------------------------------------

Topological features of a space are those features that remain invariant under 
continuous, invertible deformations of the space.
Homology groups are one of the most important topological features which, while 
not a complete invariant of shape, nevertheless are computationally feasible 
and capture important structure, in the following sense. Let $\cpx$ denote our 
space, which we will assume is a simplicial complex. For any dimension $d$, 
there is a homology group\footnote{
In this work, we will always use $\Z_2$ coefficients, so that the homology 
groups are also vector spaces.
} $H_d(\cpx)$ that captures the $d$-dimensional structure present. The zero 
dimensional group encodes the connected components of $\cpx$; the group 
$H_1(\cpx)$ contains information about the closed curves in $\cpx$ which can not 
be ``filled'' in the space (often described as handles); and the group 
$H_2(\cpx)$ captures the voids in the space that could not be filled, etc.\footnote{Note that this is a high level, intuitive description; we refer 
the reader to~\cite{munkres,hatcher} for more precise definitions.} 
For example, a hollow torus contains a single void and two classes 
of curves that are not ``filled'' in the space, and these features remain under 
continuous, invertible deformations of the shape.  

Although the above intuitive description of the homology features is useful in 
many applications, in general, homology groups are algebraic objects defined 
for a simplicial complex (or a topological space) which do not easily translate 
to a canonical geometric feature. An element of a $d$-dimensional homology group 
is a homology class, and a homology class by definition contains a set of 
$d$-cycles, where cycles which are in the same class are called homologous to 
each other. Assume our complex $\cpx$ is a 3D mesh and $d = 1$. A cycle under 
homology is a set of edges in the mesh, such that each vertex is incident to an 
even number of edges. A fixed cycle therefore corresponds to a fixed geometric 
feature of the mesh, while the homology class contains a large collection of 
these cycles. Cycles in the same class could be very different geometrically, 
see figure~\ref{fig:homologous}. Consequently, the knowledge of homology groups 
or Betti numbers (which are the dimensions of the homology groups) does not 
directly provide us with geometric features that lend themselves to 
interpretations that are necessary for many applications, especially in 
topological data analysis. 
\begin{figure}
    \centering
    \includegraphics[scale=0.6]{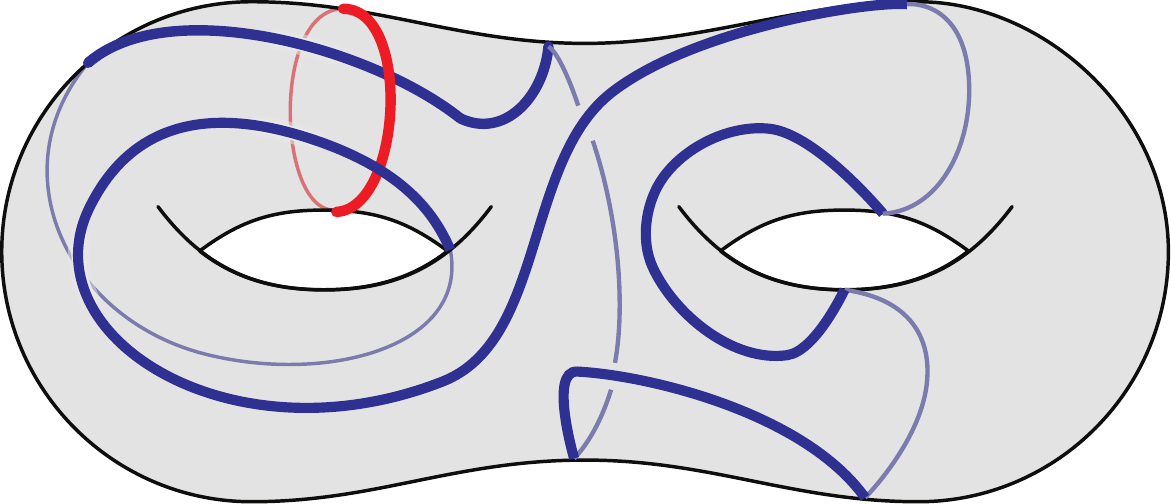}
    \caption{The blue cycle is homologous (with $\Z_2$ coefficients) to 
    the red cycle.}
    \label{fig:homologous}
\end{figure}
Therefore, it is desirable to assign unique cycles, or those with known geometric 
features, to a homology class in a natural way. Much recent work has sought to 
define measures or weights on the cycles and then represent each homology class 
by some (ideally unique) cycle which optimizes that measure. This problem has 
been well-studied in the literature with many different measures proposed; see 
Section~\ref{ssec:relatedwork} for an overview of relevant results. 
Interestingly, sometimes optimizing the cycle is NP-hard and sometimes polynomial-time, depending on the measure, the classes of spaces one allows in the input, 
and the type of homology calculation. One of the more widely studied versions 
gives each edge a weight and then seeks the representative with minimum total 
length in the homology class. This problem is known as 
\emph{homology localization}, and even its complexity varies widely depending on 
the space and the type of homology calculation.
There is also a rich body of work that seeks to compute optimal cycles in 
persistent homology classes; again, we refer to Section~\ref{ssec:relatedwork} 
for details and citations.

In most of the paper we fix a simplicial complex $\cpx$, a fixed $d$, and a weight function given on $d$-simplices of $\cpx$. However, unlike traditional homology localization, 
for most of our work, it suffices to think of the weight function as an ordering 
of the simplices. We then consider two measures that this ordering induces on 
the set of $d$-cycles. The first, already defined and studied in Cohen-Steiner 
et al.~\cite{cohensteiner2020}, is the lexicographic ordering on the chains. The 
second is a minmax measure we call the bottleneck norm, which assigns to each 
chain the maximum weight of a simplex in it. We note that computing the 
lexicographic-optimal cycle is at least as hard as computing a bottleneck-optimal 
cycle in a given homology class, as the lexicographic-optimal cycle is 
always bottleneck-optimal (but the reverse is not always true). In the rest of the paper, we often shorten lexicographical-optimal to lex-optimal.

\subsection{Contributions}

It is proved in~\cite{cohensteiner2020} that the persistent homology boundary 
matrix reduction can be used to compute the lex-optimal cycle in any given 
homology class in cubic time in the size of the complex, for any dimension. 
In this paper, we begin by presenting a new simple algorithm that, given a 
(closed orientable) surface and a 1-dimensional cycle, computes a lex-optimal cycle homologous to 
the input cycle in $O(m \log m)$ time, where $m$ is the size of a triangulation 
of the surface. We note that an algorithm with slightly better running-time ($O(n \alpha(n))$, where $\alpha(n)$ is the inverse Ackermann function) is also 
given in~\cite{cohensteiner2020} although their algorithm only works for cycles which are homologous to a boundary and satisfy some other restrictions, see \cite[Problem 17]{cohensteiner2020}.

% \rewI{What is the 'extra restriction' of the algorithm in [11]? The authors should state them explicitly. Otherwise, it's hard to evaluate the contributions of the proposed algorithm. If the extra restriction of [11] is not a big deal, then the algorithm in the current paper has less value.}

% \rewIII{``with some extra restrictions'' -- it will be good to spell out the restrictions.}

The simplest setting after surfaces is perhaps 3-manifolds embedded in Euclidean 
3-space, for instance solid 3D meshes. For simplifying run-time comparisons, we 
denote the sizes of the complexes in $\R^3$ by $n^2$. Our main contribution in 
this paper is that given a system $Ax=b$ of linear equations with $A$ a 
sparse\footnote{
By a sparse $n\times n$ matrix we mean that the number of non-zeros is at most 
$cn$ for some constant $c$.
} 0-1 matrix, it is possible to construct in $O(n^2 \log n)$ time a 3-manifold 
embedded in $\R^3$ of size $O(n^2)$ such that solving the system for a solution 
$x$ is equivalent to computing a bottleneck-optimal cycle in a given homology 
class. Our reductions remain true for integer homology and other fields $\Z_p$ 
(with an appropriate definition of optimal cycles), albeit with slight changes 
in the run-time of the reductions.

% \rewI{Lex-OHCP is not as important a problem as the 'usual' OHCP (assuming sum of weights for chains). Lex-OHCP is a new problem and has yet to prove its significance. However, despite the problem itself, the conclusion of this paper on the lower bound is not too strong: Assuming $Ax=b$ is solvable by Gaussian elimination in $O(n^{2.373})$ time, then the major conclusion of the paper implies that you cannot expect Lex-OHCP for $R^3$-embedded complexes to be solved faster than $O(n^{1.1865})$, unless you can make computing matrix rank faster. There definitely is a distance from $O(n^{1.1865})$ to linear time, but this is not too exciting.}
% \erin{Not sure how to reply to this - maybe we don't have to?}

In~\cite{Dey2019}, Dey presents an algorithm for computing the persistent 
diagram of a height function for a complex in $\R^3$ (of size $n^2$) in 
$O(n^2 \log{n})$ time. In addition, in the same running time a set of 
generators can be computed. From our reduction, it follows that, if the given 
function on the complex (which is a mesh in $\R^3$) is not a height function, 
then these computations cannot be done faster than rank computation 
for a sparse 0-1 matrix. This gives a first answer to the main question asked 
in~\cite{Dey2019}, asking if efficient algorithms exist for the non-height 
functions. In other words, our results show that there is a disparity between the 
efficiency of algorithms for computing sub-level-set persistence for 3D meshes 
of height and of general functions.

Ordinary Betti numbers for complexes in $\R^3$ (of size $n^2$) can be computed 
in $O(n^2 \log n)$ time~\cite{DeEd95} (if a triangulation of the complement is also given). It follows from our reduction that computing persistence Betti numbers for an arbitrary function for complexes in $\R^3$ is as hard as computing the rank of a sparse 0-1 $n \times n$ matrix (even if a triangulation of the complement is given). 
To our knowledge, this is the first such distinction between persistent and 
ordinary homology computations.

We should also mention that the significance of the reductions, like the ones 
presented in Section~\ref{s:reductions}, is not giving a lower bound for the problem in the 
complexity theoretic sense, as we have not done this, since we do not know if 
solving a sparse system has a non-trivial lower bound. Rather, the reductions 
show that the geometry of the problem does not help in improving trivial 
deterministic algorithms. For instance, one of our theorems tells a researcher
of geometric methods that it is futile to try to find a deterministic 
$O(m\log{m})$ algorithm that computes persistent Betti numbers for meshes in 
3D if that researcher is not interested in improving the best run-time for 
matrix rank computation for sparse matrices. As mentioned before, an 
$O(m \log(m))$ algorithm exists if we are interested only in height functions. 
Here, $m$ is size of the input mesh.

%\rewIII{I would suggest that the authors check the comments in the last paragraph of section 1.1. The algorithm in [12] works with the stated complexity only when the input complex is embedded in a larger complex that triangulates also the complement. The size $n^2$ should reflect the size of this larger complex, not only the input complex.}

%\rewI{Presentation of the major results (reduction to solving linear equation systems) at the beginning is obscure. After reading Section 1.1, I could hardly grasp what the lower bound looks like. Assuming the input for OHCP to be of size $O(n^2)$ makes understanding the result even harder. I would suggest moving corollary 7-9 to Section 1.1 (or at least point to them in Section 1.1). I also suggest that the author explicitly state what the facts imply, e.g., like what I have stated on 'cannot be done faster than $O(n^{1.1865})$' (this is just an example; the authors should make sure the statement is correct and complete).}

%
\subsection{Related work}
\label{ssec:relatedwork}

The Optimal Homologous Chain Problem (OHCP) is a well studied problem in 
computational topology, which specifies a particular cycle or homology class and 
asks for the ``optimal'' cycle in the same homology class. Similarly, the problem 
of homology localization~\cite{Zomorodian2008} specifies a topological ``feature'' 
(usually a homology class, such as a handle or void), and asks for a 
representative of that class. Such representatives can be used for simplification, 
mesh parametrization, surface mapping, and many other problems.

Of course, computability and practicality often depend on the exact definition 
of ``optimal'', with a wide range of variants. One natural notion of optimal is 
simply to assume the input complex has weights on the simplices, and to compute 
the representative of minimum length, area, or volume (depending on the 
dimension).  Here, length (or area or volume) of a chain is computed as a 
weighted summation of the weights of its simplices; it then remains to specify 
the coefficients used when computing these objects, since the choice of 
coefficient can greatly affect the results. The resulting trade-offs can be 
quite subtle and surprising. For example, minimum length homologous cycles with 
$\Z$ coefficients in the homology class of highest dimension, if homology is torsion free, reduces 
to linear programming and hence is solvable in polynomial 
time~\cite{Dey2011,Chambers2014}.
In contrast, with $\Z_2$ coefficients the problem is NP-hard to compute, even on 
2-manifolds~\cite{Chambers2009,chambers2019minimum}. In fact, homology localization is NP-hard to 
approximate for $\Z_2$ coefficients within a constant 
factor even when the Betti number is constant~\cite{Chen2011}, and APX-Hard but 
fixed parameter tractable with respect to the size of the optimal 
cycle~\cite{Borradaile2020, BuCa-etal2012}.
% \rewI{The authors shouldn't say it is 'NP-hard to approximate the second Betti number for $Z_2$ coefficients within a constant factor'. Computing Betti numbers can be easily done. This should be something like 'homology localization is NP-hard in dimension 2 even when ...'.}. \erin{Fair - let's reword if you both agree.}
When coefficients are over $\Z_k$, the problem becomes Unique Games Conjecture 
hard to approximate~\cite{Grochow2018}. Homology localization has also been 
studied under the lens of parameterized complexity, where it is fixed parameter 
tractable in treewidth of the underlying complex~\cite{Blaser2020}.

There has been considerable followup work on different variants of homology 
localization. One major line of work focuses on persistent homology generators, 
which are often related to homology localization but seek generators in a 
filtration which realize a particular persistent homology 
class~\cite{Chen2010,Oleksiy2010,Hiraoka2016,Obayashi2018,Dey2019,10.5555/3381089.3381247,dey2018persistent}; 
again, there is high variance on notions of optimality for these generators and 
on input assumptions, both of which affect complexity.
More directly related to this paper, as noted in the introduction, lexicographic 
minimum cycles under some ordering on the simplices have also been 
studied~\cite{cohensteiner2020}. 

Hardness of computing ordinary homology for complexes in Euclidean spaces is 
discussed in~\cite{EdPa14}, where a reduction to rank computation of sparse 
matrices is presented; the results of this paper thus in a sense extend those 
of~\cite{EdPa14}.

We note that there are randomized and probabilistic algorithms for sparse matrix 
operations in almost quadratic time~\cite{Wied1986, ChKw-wtal2013}. As a result,  
our reductions do not apply for these types of algorithms, since they take 
$O(n^2 \log(n))$ time for a matrix with $O(n)$ non-zeros. It is natural to ask 
for a reduction that is linear in the size of the input $A$; indeed, this 
presents an interesting direction of future research.

%\rewI{Around line 122, the authors reviewed works on generators for persistent homology. Since this paper is dedicated to hardness, they may consider adding the following works on NP-hardness for persistent generators:\\
%- Computing Minimal Persistent Cycles: Polynomial and Hard Cases\\
%- Persistent 1-Cycles: Definition, Computation, and Its Application}

 %------------------------------------------------------------------------------
 \section{Background }
% \label{sec:background}
%------------------------------------------------------------------------------

 We begin with a brief overview of terminology and background; for more detailed 
 coverage on these topics, we refer the reader to textbooks on 
 topology~\cite{munkres} and computational 
 topology~\cite{EdHa2010,Oudot2015,DeyWang2021}.

 \subparagraph{Simplicial complex}
 Let $V$ be a finite set. A (abstract) \emph{simplicial complex} $\cpx$ is a set 
 of subsets of $V$ such that if $\sigma \in \cpx$ and $\tau \subseteq \sigma$, 
 then $\tau \in \cpx$. An element of $\cpx$ is called an abstract simplex. If 
 $\sigma \in \cpx$ has $d+1$ elements then it is $d$-dimensional. The dimension 
 of a simplicial complex is the maximum dimension of its simplices. 
 A 0-dimensional simplex is called a \emph{vertex}, a 1-dimensional simplex an 
 \emph{edge}, 2-dimensional simplex a \emph{triangle}, 3-dimensional simplex a 
 \emph{tetrahedron}. The \emph{size} or the \emph{complexity} of a complex is 
 the number of its simplices.

 A simplex $\tau$ is called a proper face of a simplex $\sigma$ if 
 $\tau \subset \sigma$. For a simplex $\tau \in \cpx$, the \emph{star} of $\tau$ 
 is the set of simplices $\sigma \in \cpx$ such that $\tau \subset \sigma$ and 
 the \emph{closed star}, $S(\tau)$, additionally also includes all faces of the 
 simplices in the star. Therefore, the closed star is a simplicial complex, 
 whereas the star is not. The \emph{link} of a simplex $\tau$ is defined as 
 $L(\tau) = \{\sigma-\tau, \sigma \in S(\tau)\}$. The link of a simplex is also 
 a simplicial complex.

 A geometric \emph{$d$-simplex} is a convex hull of $d+1$ affinely independent 
 points $\{p_0, \ldots, p_{d}\}$ in $\R^d$. Affinely independent means that the set 
 $\{p_i - p_0, 1 \leq i \leq d \}$ is an independent set of vectors. 
 We say any geometric $d$-simplex realizes an abstract $d$-simplex. Note that a 
 face of a $d$-simplex is realized in the boundary of the geometric $d$-simplex. 
 A simplicial complex $\cpx$ uniquely determines a topological space called its 
 \emph{underlying space} and denoted $|\cpx|$. If the complex has $n$ vertices, 
 then all simplices of $\cpx$ are realized simultaneously in the boundary of an 
 $(n-1)$-simplex $\Delta$. We can define $|\cpx|$ to be the union of the geometric 
 realizations of the simplices in $\cpx$ on the boundary of $\Delta$.

 The $k$-skeleton of a simplicial complex $\cpx$ is the simplicial complex which is 
 the set of simplices of $\cpx$ of dimension at most $k$. Moreover, we note by 
 $\cpx_d$ the set of $d$-simplices of $\cpx$.

 A \emph{simplex-wise linear} function $f: |\cpx| \rightarrow \R$ is a function 
 determined by the value which it assigns to any vertex. In the relative interior 
 of any geometric simplex the function linearly interpolates between the vertex 
 values.
 Alternatively, $f$ is simplex-wise linear if the restriction of it to any 
 geometric simplex of $|\cpx|$ is linear. We call such a function \emph{generic} 
 if the values of vertices are distinct.

 We say that a simplicial complex $\cpx$ is realized in $\R^3$ or is given as a 
 subset of $\R^3$ if there a simplex-wise linear function 
 $f: |\cpx| \rightarrow \R^3$ which is one-to-one. Note that such a function is 
 uniquely determined by giving the three coordinates of any vertex of $\cpx$ in 
 $\R^3$. So, 3D meshes used in visualisation, computer graphics, etc. 
 are simplicial complexes realized in $\R^3$. 

 \subparagraph{Manifold} 
 A (topological) \emph{$d$-manifold} $\mf$ is a nice enough (i.e. second 
 countable, compact, and Hausdorff) topological space with the property that 
 every point in the space has a neighborhood that is homeomorphic to a 
 $d$-dimensional Euclidean space or a half-space.
 The boundary of the manifold, $\bo \mf$, is the set of points of the second 
 type.

 In this paper, we work with manifolds which are underlying spaces of simplicial 
 complexes. Such a simplicial complex is called a \emph{combinatorial manifold}. 
 For a combinatorial manifold we write $|\cpx| = \mf$.
 Intuitively, for a 2-manifold, we thus have an embedding of a the complex $\cpx$ 
 on a some underlying surface, such that vertices are mapped to distinct points 
 and edges are mapped to non-crossing curves. More precisely, in a 2-dimensional 
 combinatorial manifold, $\cpx$ is mapped onto the underlying surface such that 
 the link of a vertex is a 1-dimensional complex, i.e., 
 a graph, and the link of an edge is a set of vertices. 
 A simplicial complex is a combinatorial 2-manifold if and only if:
 \begin{itemize}
     \item the link of every vertex is either a simple circle or a simple path,
     \item the link of an edge is a pair of vertices or a single vertex.
 \end{itemize}
 Those vertices and edges whose links are paths and a single vertex define a 
 subcomplex which is a combinatorial 1-manifold, whose underlying space is the 
 boundary of the manifold $\mf$.

 A \emph{system of loops} on a 2-manifold $\mf$ is a set of pairwise disjoint 
 simple loops $L$ with a common base point such that $\mf \setminus L$ is a 
 topological disk. 
 %\rewII{for M an oriented surface}
 On an orientable 2-manifold, any system of loops contains exactly $2g$ loops, $g$ being 
 the genus of $\mf$, and $\mf \setminus L$ is a $4g$-gon where each loop appears 
 as two boundary edges; this $4g$-gon is called the \emph{polygonal schema} 
 associated with $L$.
% %\rewI{Doesn't the 4g-gon exist only for 2-manifolds?}
 We define a \emph{cut graph} of the complex $\cpx$ as a subgraph $G$ of the 
 1-skeleton of $\cpx$ such that $\cpx \setminus G$ is homeomorphic to a disk. 
 Cut graphs are generalizations of the cut locus, which is essentially the 
 geodesic medial axis of a single point. As we note above, every system of 
 loops trivially forms a cut graph, as its removal generates a polygonal schema, 
 but there may be many different cut graph in general.

 Algorithmic approaches on combinatorial 2-manifolds are often approached via a 
 tool called the \emph{tree-cotree decomposition}~\cite{eppstein2003}, which 
 is a partition of the edges of the 2-complex $\cpx$ into three sets, 
 $T \cup Q^* \cup L$, where $T$ is a spanning tree of the graph, $Q^*$ are the 
 dual edges of a spanning tree $Q$ of the dual graph, and $L$ is the set of 
 leftover edges of $\cpx$. Here, a \emph{spanning tree} is a tree formed by a 
 subset of the edges in $\cpx$ such that all vertices of $\cpx$ are included. 
 A \emph{cotree} is a tree in the dual graph. The 
 \emph{dual graph} of $\cpx$ has a vertex for each triangle in $\cpx$ and two 
 vertices are joined by an edges if the two corresponding triangles share an 
 edge. Euler’s formula implies that $|L| = 2g$.  In fact, each $\ell \in L$ 
 creates a cycle when added to $T$, and the collection of such cycles forms 
 a system of loops, which in turn generates a polygonal schema.

 A combinatorial 3-manifold $\mf$ is a 3-dimensional simplicial complex such that:
 \begin{itemize}
     \item the link of every vertex is a combinatorial 2-sphere or a 2-ball,
     \item the link of every edge is a circle or a half-circle,
     \item the link of every triangle is a pair of vertices or a single vertex.
 \end{itemize}
 Again the simplices of the second type define a combinatorial 2-manifold which is 
 the boundary of $\mf$.

 \subparagraph{Homology}
 Consider a simplicial complex $\cpx$. A \emph{$d$-chain} over the coefficient 
 field $\Z_2$ is a formal sum of $d$-simplices: $\sum_i t_i \spx_i$, where 
 $t_i \in \Z_2$ and $\spx_i \in \cpx_d$. The set of $d$-chains $\C_d(\cpx)$ is 
 called the $d$-dimensional \emph{chain group}, where we add the chains by adding 
 coefficients. 
 A chain can also be viewed as a set of simplices, where the simplex $\spx_i$ is 
 in the set whenever $t_i = 1$. Under this view, addition of $d$-chains is the 
 same as taking the symmetric difference of the simplices in each chain. 

 The \emph{boundary operator} $\bo_d: \C_d \rightarrow \C_{d-1}$ is a linear 
 transformation that to each $d$-simplex $\spx$ assigns the set of 
 $d-1$-simplices on its boundary. This defines $\bo_d$ uniquely on all the chains 
 by linearity. A chain is called a \emph{cycle} if its boundary is zero. The 
 $d$-cycles also form a vector space which we denote by 
 $Z_d(\cpx)\subset C_d(\cpx)$. A $d$-chain is called a \emph{boundary} if it lies 
 in the image of $\bo_{d+1}$. We denote the $d$-boundaries by $B_d(\cpx)$. 
 The most important property of the boundary operator is that for all $d\geq 0$, 
 $\bo_{d-1}\bo_{d}=0$. This relation implies that $B_d(\cpx) \subset Z_d(\cpx)$. 
 Therefore we can define the quotient 
 \[
     H_d(\cpx) = \faktor{Z_d(\cpx)}{B_d(\cpx)}.
 \]
 $H_d(\cpx)$ is called the $d$-dimensional \emph{homology group} of $\cpx$. It is 
 also a vector space and its dimension is called the $d$-th ($\Z_2$) Betti number, 
 denoted $\beta_d(\cpx)$.
 Observe that homology classes partition the cycles. If $z \in Z_d(\cpx)$ is a 
 cycle we denote by $[z] \in H_d(\cpx)$ its homology class. Two chains $c$ and 
 $c'$ are called \emph{homologous} if $c + c'$ is a boundary chain. Homologous 
 cycles belong to the same homology class.

 \subparagraph{Knots and Links}
 A \emph{knot} is a simple closed curve in $\R^3$, and a \emph{link} is a set of 
 disjoint knots in $\R^3$. Such objects are often represented and studied via 
 \emph{link diagrams}, or projections of the link to $\R^2$ which are injective 
 except for finitely many crossings, labeled to indicate which strand is crossing 
 over the other. See Figure~\ref{fig:knotdiagram} for a link example.

 \begin{figure}
     \centering
     \includegraphics[height=2in]{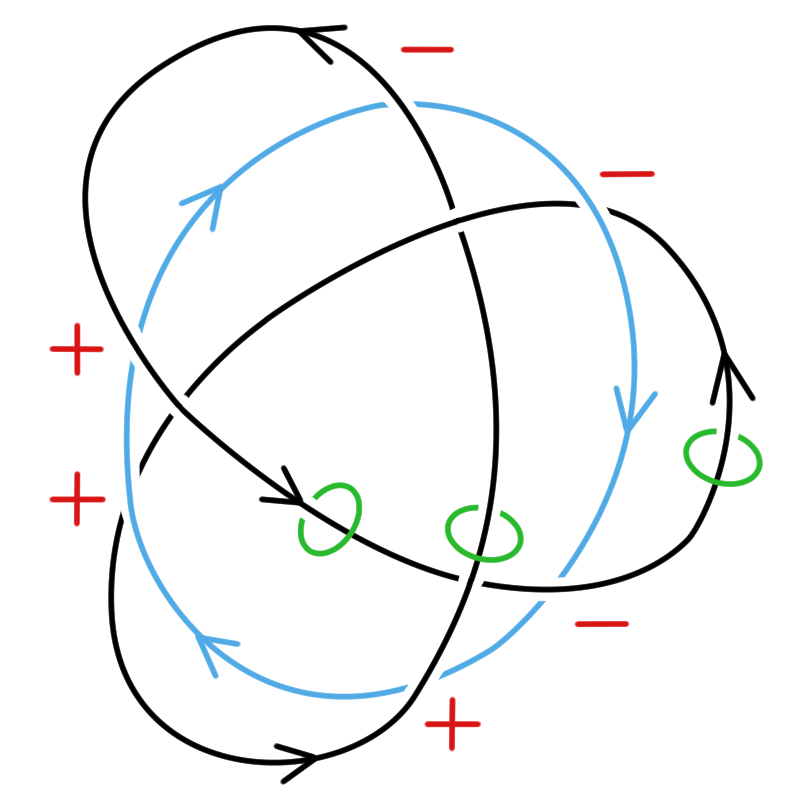}
     \caption{An example of a link diagram, with the crossing number values 
     shown in red for each crossing; the crossing number of this knot is 0. 
     Three meridians of the same link component of the diagram are also 
     indicated in green.}
     \label{fig:knotdiagram}
 \end{figure}

 Given a link (or knot) $L$, a \emph{tubular neighborhood} is an embedding 
 $t : L \times D^2 \rightarrow S^3$ so that $t(x,0) = x$ for $x \in L$ and $D^2$ 
 is an open unit disk; intuitively, this is simply a small thickening of the link 
 that does not introduce any intersections between strands. The 
 \emph{knot complement} of $L$ is $S^3$ minus the tubular neighborhood. The 
 \emph{meridian} of a spatial knot is a small circle that goes around the knot. 
Note that regardless of where we draw meridians, they are all homologous to each 
 other in the complement of the knot; again, see Figure~\ref{fig:knotdiagram}.

 The \emph{integer linking number} is an invariant to quantify the linking between 
 two closed knots, which intuitively measures the number of times that the curves 
 wind around each other. It is not a complete invariant: Any two unlinked curves 
 have linking number zero, but two curves with linking number zero may still be 
 linked. These can be formalized in several different ways; see 
 e.g.~\cite{RICCA2011}, although we use a simple combinatorial characterization 
 here: we label each crossing of the diagram as positive or negative, according 
 to the classification shown in Figure~\ref{fig:crossingnumber}.
 Then, the total number of positive crossings minus the total number of negative 
 crossings is equal to twice the linking number of the diagram. This defines the 
 integer linking number. The \emph{$\Z_2$ linking number} is just the parity of 
 the absolute value of the integer linking number, so it is either 0 or 1.

 By a \emph{spatial knot} or a \emph{spatial link}, we mean a simple closed curve 
 or a collection of disjoint simple closed curves respectively, in some standard 
 3-ball. The $\Z_2$ linking number between two spatial knots can simply be defined 
 as follows. Take any singular disk bounded by one curve and count the intersections of the 
 disk with the other curve. The parity of this number is the $\Z_2$ linking number.

% %\rewII{does this definition work for non-trivial links? A knot can only be bounded by an embedded disk if it is trivial. I believe in general it can be bounded by an immersed disk, is this what is meant? And is that correct?}

 \begin{figure}
     \centering
     \includegraphics[width=4in]{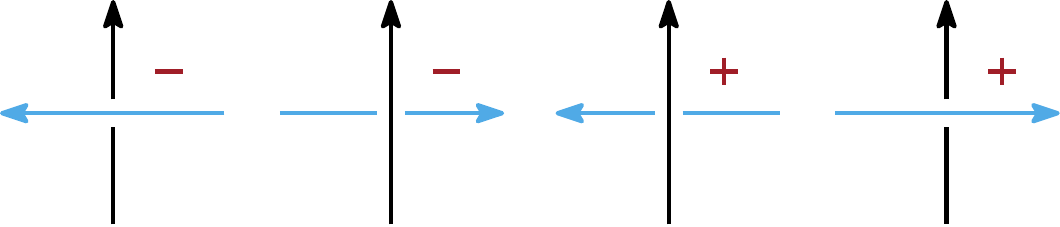}
     \caption{To compute the crossing number, each over/under crossing is labeled 
     as shown; the crossing number is the sum of all crossing labels divided by 
     two.}
     \label{fig:crossingnumber}
 \end{figure}

%------------------------------------------------------------------------------
\section{Bottleneck and lex-optimal cycles}
%------------------------------------------------------------------------------

Let $\cpx$ be a simplicial complex and $\cpx_d = \{\spx_0, \ldots, \spx_m\}$ be 
the set of $d$-dimensional simplices of $\cpx$. A \emph{weight function} $\wf$ 
on $\cpx_d$ is an arbitrary function 
$\wf: \cpx_d \rightarrow \R_{>0} = \{r \in \R \mid r>0 \}$. Thus $\wf$ is
defined on the generators of the chain group $\C_d(\cpx)$. For simplicity, we assume that $\wf$ is injective, i.e., simplices have distinct 
weights. For our purposes, such a weight function is equivalent to one with 
co-domain $\N - \{0\}$, or a total ordering of the simplices. If the weight 
function is not injective, then the edges with the same weight have exactly the 
same potential to appear in the optimal cycle and adding some small 
perturbations to their weights to distinguish them will not affect the 
consistency of the end result.

We extend $\wf$ to the function $\bn_{\wf}: \C_d(\cpx) \rightarrow \N$ as 
follows: for a chain $\ch \in \C_d(\cpx)$ of the form 
$\ch = \sum_{i=0}^{m} t_i \spx_i$, where, $\forall i$, $t_i \in \Z_2$, we set
\begin{equation*}
    \bn_{\wf}(\ch) =
    \begin{cases}
        \max_{
            \substack{0 \leqslant i \leqslant m \\ t_i = 1}
        }\{\wf(\spx_i)\} & \text{if $\ch \neq 0$}\\
        0 & \text{if $\ch = 0$}.
    \end{cases}       
\end{equation*}
In other words, if we view a chain $\ch \in \C_d(\cpx)$ as a set of simplices, 
$\bn_{\wf}$ assigns to $\ch$ the maximum weight of a simplex in $\ch$. We call 
$\bn_{\wf}$ the \emph{bottleneck norm} on $\C_d(\cpx)$. 

By the \emph{maximum simplex}, we mean the simplex with the largest 
weight in the chain. 

Although $C_d(\cpx)$ is a finite vector space, the function $\bn_{\wf}$ has 
properties analogous to a norm. First, it is non-negative. Second, assume $x$ 
and $y$ are chains and $\sigma_x$ and $\sigma_y$ are their maximum simplices. 
The maximum simplex of $x+y$ has weight at most 
$\max\{\wf(\sigma_x),\wf(\sigma_y)\} \leq \wf(\sigma_x) + \wf(\sigma_y)$. 
Hence $\bn_{\wf}$ satisfies the triangle inequality. And third, clearly if 
$\bn_{\wf}(\ch) = 0$ then $\ch = 0$. 

One can also define a lexicographic ordering on the $d$-chains based on the 
given weight function $\wf$, see also~\cite{cohensteiner2020}. 
For this purpose, we order the $d$-simplices such that $\sigma < \sigma'$ if 
and only if $\wf(\sigma) < \wf(\sigma')$. We assume that the subscript of the 
$\sigma_i$ respects the order. Let $\ch = \sum t_j \sigma_j$ and 
$\ch' = \sum t'_j \sigma_j$. We define $\ch <_L \ch'$ if there exists an index 
$j_0$ such that for $j > j_0$, $t_j = 1$ if and only if $t'_j = 1$, and 
$t_{j_0} = 0$, $t'_{j_0} = 1$. We write $\ch \leq_L \ch'$ if $\ch <_L \ch'$ or 
$\ch = \ch'$.

\subsection{Problem definitions}

In this section, we give formal definitions for our two main problems, the 
\emph{Bottleneck-Optimal Homologous Cycle Problem (Bottleneck-OHCP)} and the 
\emph{Lexicographic-Optimal Homologous Cycle Problem} 
(Lex-OHCP)~\cite{cohensteiner2020}, as well as defining optimal bases for 
homology groups.

\subparagraph{Bottleneck-OHCP} 
Given a weight function $\wf$ on $\cpx_d$, and a cycle $\zy \in \Zy_d(\cpx)$, 
compute a cycle $\ozy_*$ such that $[\ozy_*]=[\zy]$ and such that $\ozy_*$ 
minimizes the bottleneck norm. More formally, find $\ozy_*$ such that 
$\bn_{\wf}(\ozy_*) = \min 
\{\bn_{\wf}(\ozy) \mid 
\ozy \in\Zy_d(\cpx), \exists c \in \C_{d+1}(\cpx), \ozy + \zy = \bo c\}$.

In other words, the weight of the maximum simplex in $\ozy_*$ is 
minimized in the homology class of $\zy$. Therefore, we can also define the 
bottleneck weight function $\bn^*_{\wf}: \Hom_d(\cpx) \rightarrow \R_{\geq 0}$ 
on the homology classes by using the minimum $[\zy] \mapsto \bn_{\wf}(\ozy_*)$.
Thus the problem can also be formulated as computing the cycle which achieves 
$\bn^*_{\wf}(\hc)$ given any representative of the homology class $\hc$.

\subparagraph{Lex-OHCP} 
Given a weight function $\wf$ on $\cpx_d$, and a cycle $\zy \in \Zy_d(\cpx)$, 
compute the cycle $z_*$ such that $[z_*]=[\zy]$ and for any $d$-cycle $y$, 
if $[y] = [\zy]$ then $z_* \leq_L y$.

We note that by our convention on the weight function, the lex-optimal cycle is 
always unique. Moreover, the lex-optimal cycle is also bottleneck-optimal, 
however, the converse is not true.
Our reductions and hardness results are formulated for the bottleneck norm. 
Counter-intuitively, considering this intermediate problem simplifies our 
reductions and hardness proofs. 

\subparagraph{Optimal basis} 
For any suitable measure or weight function on the cycles we can define the 
corresponding optimal basis. Let $\leq_p$ be some pre-order on the set of 
$d$-cycles $Z_d(\cpx)$ such that every subset $A \subset Z_d(\cpx)$ has some 
chain $a$, such that $\forall z \in A, a \leq_p z$.

With respect to this pre-order, we define the \emph{optimal basis} for 
$d$-homology, as a set of cycles $B \subset Z_d(\cpx)$, representing the 
homology classes generating $H_d(\cpx)$, as follows. Put the smallest non-zero 
element of $Z_d(\cpx)$ in $B$. Now, repeat the following until $B$ is a 
representative basis for $d$-homology: let $A$ be the union of the cycles in 
the classes that are not in the subspace generated by the classes represented in $B$. Put the 
smallest cycle of $A$ in $B$.
%\hannah{I corrected the definition as one reviewer asked. Did I understood it right?}

In Section~\ref{s:algorithm}, we will describe a simple algorithm for computing 
the lex-optimal basis for the 1-dimensional homology of a surface.

\subsection{The Sub-level bottleneck weight function}

We defined the bottleneck weight function 
$\bn^*_{\wf} : \Hom_d(\cpx) \rightarrow \N$ on homology classes using a weight 
function on $d$-simplices for some fixed dimension $d$. Here we give a second, more natural definition of a 
generalization of this weight function. 
Let $\gf: |\cpx| \rightarrow \R$ be a generic simplex-wise linear function. The 
sub-level set of a value $r \in \R$ is the set 
$|\cpx|_{\leq r} = \{x \in |\cpx| \mid \gf(x) \leq r\}$. For any $d$-cycle 
$\zeta \in Z_d(\cpx)$, define 
$
    \bn_\gf(\zeta) := \min \{r \in \R \mid 
        \exists z \in Z^s_{d}(\sls{r}),\; 
        \exists y \in C^s_{d + 1}(\cpx),\; 
        \zeta + z = \bo y
    \},
$
where $C^s_{\bullet}$ denotes the singular chain complex. 
Intuitively, $\bn_\gf(\zeta)$ is the smallest value of $r$ such that a chain 
homologous to $\zeta$ in $\cpx$ appears in the sub-level-set. This value of 
course depends only on the homology class of $\zeta$. Thus, we have a weight 
function $\bn^*_\gf: \Hom_d(\cpx) \rightarrow \R$.

\begin{lemma}\label{l:funcbottle}
    For any weight function $\wf$ on $d$-simplices of $\cpx$, there is a generic 
    simplex-wise linear function $\gf$ on the barycentric subdivision of $\cpx$, 
    such that for any homology class 
    $h \in H_{d}(\cpx)$, $\bn^*_\gf(h') = \bn^*_\wf(h)$, where $h'$ is the image of $h$ in the subdivision.
\end{lemma}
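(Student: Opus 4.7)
The plan is to encode $\wf$ as values on the vertices of the barycentric subdivision $\cpx'$ of $\cpx$ in a ``layered'' way, so that for any threshold $r$ the sub-level set $|\cpx'|_{\leq r}$ has the same $d$-dimensional homology as the subcomplex $\cpx^{\leq r}\subseteq\cpx$ consisting of the full $(d{-}1)$-skeleton together with every $d$-simplex of weight at most $r$. Writing $b_\tau$ for the vertex of $\cpx'$ corresponding to a simplex $\tau\in\cpx$ and fixing $W>\max_{\spx\in\cpx_d}\wf(\spx)$, I would set
\begin{equation*}
\gf(b_\tau)=\begin{cases}
-1 & \text{if } \dim\tau<d,\\
\wf(\tau) & \text{if } \dim\tau=d,\\
W+\dim\tau & \text{if } \dim\tau>d,
\end{cases}
\end{equation*}
and extend simplex-wise linearly over $\cpx'$. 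A tiny perturbation of the vertex values within each of the three layers makes $\gf$ generic without affecting the sub-level sets that arise for the bottleneck norm.

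\textbf{Topology of sub-level sets.} For a value $r$ distinct from every $\wf(\spx)$, consider a top-dimensional simplex $b_{\tau_0}\cdots b_{\tau_k}$ of $\cpx'$ corresponding to a flag $\tau_0\subsetneq\cdots\subsetneq\tau_k$ in $\cpx$. The layered choice of values makes $\gf(b_{\tau_0}),\ldots,\gf(b_{\tau_k})$ strictly increasing, so the vertices with $\gf\leq r$ form an initial prefix $b_{\tau_0},\ldots,b_{\tau_{j-1}}$, which are exactly the barycenters of faces lying in $\cpx^{\leq r}$. The piece $\{\gf\leq r\}$ inside this top-simplex is a truncated convex region whose ``upper'' facets lie on hyperplanes $\{\gf=r\}$ crossing the edges to the higher vertices, and it strong-deformation-retracts radially onto the face $b_{\tau_0}\cdots b_{\tau_{j-1}}$. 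These local retractions glue into a global strong deformation retraction of $|\cpx'|_{\leq r}$ onto the subcomplex of $\cpx'$ spanned by $\{b_\tau:\gf(b_\tau)\leq r\}$, which is precisely the barycentric subdivision of $\cpx^{\leq r}$, and so has the same homology as $\cpx^{\leq r}$.

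\textbf{Conclusion and main obstacle.} Let $h'\in H_d(\cpx')$ be the image of $h\in H_d(\cpx)$ under the canonical subdivision isomorphism. By the retraction above, $h'$ lies in the image of $H_d(|\cpx'|_{\leq r})\to H_d(|\cpx'|)$ iff $h$ lies in the image of $H_d(\cpx^{\leq r})\to H_d(\cpx)$, i.e.\ iff $h$ has a simplicial representative whose maximum-weight simplex has weight at most $r$. Taking the infimum over such $r$ yields $\bn^*_\gf(h')=\bn^*_\wf(h)$; both infima are attained, and the finitely many critical values $r=\wf(\spx)$ are reached by approaching from a slightly larger regular value. The only delicate point is checking the deformation retractions inside the top-simplices of $\cpx'$ and compatibly gluing them across shared faces, but this is made essentially automatic by the monotonicity of $\gf$ along every flag, which is exactly what the layered construction guarantees; in particular, the ``low'' vertices always form a face of each flag-simplex, so the radial retraction toward that face is unambiguous and preserves common subfaces.
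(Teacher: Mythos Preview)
Your argument is correct. One small fix: the claim that $\gf(b_{\tau_0}),\ldots,\gf(b_{\tau_k})$ is \emph{strictly} increasing does not follow from an arbitrary tiny perturbation within layers (two barycenters of dimension $<d$ in the same flag could end up in the wrong order). But you only use this to conclude that the vertices with value $\leq r$ form an initial prefix of the flag, and that conclusion is automatic from the three-layer structure together with the fact that a flag contains at most one simplex of each dimension, hence at most one of dimension exactly $d$. Alternatively, perturb by adding a multiple of $\dim\tau$ within each layer to make the monotonicity literal.

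Your construction differs from the paper's in one essential respect. The paper's sketch assigns a very small value to \emph{every} barycenter of a non-$d$-simplex, including barycenters of simplices of dimension $>d$; you instead assign those barycenters values larger than every $\wf(\spx)$. Your choice is what makes the sub-level-set argument go through. With the paper's choice, when $\cpx$ has simplices of dimension $>d$ the sub-level set at a threshold $r$ below every $\wf$-value already contains all vertex-barycenters and all $(>d)$-barycenters of $\cpx'$ together with the edges of $\cpx'$ joining them, and a $d$-cycle can bypass the heavy $d$-barycenters through this scaffold. Concretely, take $d=1$ and $\cpx$ a triangulated cylinder with every edge-weight at least $10$: the loop alternating original vertices and triangle-barycenters lies in $|\cpx'|_{\leq r}$ for $r$ near $0$ and is homologous to the generator of $H_1$, so $\bn^*_\gf$ would be near $0$ while $\bn^*_\wf\geq 10$. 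Your layered assignment (low below dimension $d$, exact at $d$, high above $d$) rules this out and forces $|\cpx'|_{\leq r}$ to deformation retract onto the barycentric subdivision of the intended subcomplex $\cpx^{\leq r}$, which is exactly what the lemma needs.
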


%\rewI{About Lemma 1: What is the role of barycentric subdivision here? If would be much more helpful if the authors could give explanations on why barycentric subdivision is needed. The reason doesn't seem so obvious.}

\begin{proof}
    Let $\cpx'$ denote the subdivision of $\cpx$. Recall that for each simplex 
    $\sigma$ of $\cpx$ there is a vertex $v(\sigma)$ in $\cpx'$. If $\sigma$ is 
    a $d$-simplex, we set $\gf(v(\sigma)) = \wf(\sigma)$. 
    %\rewI{Somewhere in the paper, you should notice that you fix a dimension 'd' which is the dimension of interest. Otherwise, in places like line 334, suddenly you say that 'if sigma is a d-simplex', which is hard to get because d can be any number and so sigma is nothing special.}
    For all other vertices $v$ of $\cpx'$ we define $\bn_\gf(v)$ to be a very 
    small positive number. We then replace these weights with positive integers 
    while maintaining their order. It is easy to check that our function 
    satisfies the statement of the lemma.
\end{proof}

Note that we use the barycentric subdivision simply to give a finer level of granularity on the sub-level sets. This subdivision appears to be necessary for the construction of the function $\gf$. 

%For more details on this relation see Section~\ref{ssec:persistentSolution} of the Appendix.

%------------------------------------------------------------------------------
\subsection{Bottleneck weight function and persistent homology}
\label{ssec:persistentSolution}
%------------------------------------------------------------------------------

A homology class $h \in H(\cpx)$ is a set of cycles such that the difference 
of any two of the cycles is a boundary chain. Homology classes are intuitively 
referred to as homological features. Persistent homology tries to measure the 
importance of these features. For details see~\cite{EdHa2010}. 

Let the set of simplices of $\cpx$ be ordered such that for each simplex $\spx$, 
the simplices on the boundary of $\spx$ appear before $\spx$ in the ordering. For 
instance, this ordering can be given by the time that a simplex is added, if we 
are building the complex $\cpx$ by adding a simplex at a time. Of course, we need 
the boundary of a simplex to be present before adding it. Let 
\[ 
    \es = \cpx_0 \subset \cpx_1 \cdots \subset \cpx_{n-1} \subset \cpx_{n} = \cpx 
\] 
be the sequence of complexes such that $\cpx_i$ consists of the first $i$ 
simplices in the ordering. Such a sequence is called a \emph{filtration}. 
For $j \geq i$, let $f^{i,j}: \cpx_i \subset \cpx_j$ be the inclusion and 
$f^{i,j}_\sharp$ the induced homomorphism on the chain groups. The homology 
groups $H_d(\cpx_i)$ change as we add simplices. We want to track homology 
features during these additions.

For $0 \leq i \leq j \leq n$, the $d$-dimensional 
\emph{persistent homology group} $H^{i,j}_d$ is the quotient
\[
    H_d^{i,j} = \frac{
            f^{i,j}_\sharp(Z_d(\cpx_i))
        }{
            B_d(\cpx_j) \cap f^{i,j}_\sharp(Z_d(\cpx_i))
        }.
\]
In words, this is the group of those homology classes of $H_d(\cpx_j)$ 
which contain cycles already existing in $\cpx_i$.

We give now an alternate description of the persistent homology classes.
The cycles representing homology features allow us to relate the classes of 
different spaces to each other. We will consider, in each $\cpx_i$, a basis of 
homology and assign to each homology class in these bases a cycle which we call 
a \emph{p-representative} cycle. Consider $\cpx_i$ and let $\spx$ be a 
$d$-simplex such that $\cpx_i \cup \{\spx\} = \cpx_{i+1}$. There are two 
possibilities for the change that adding $\spx$ causes in the homology groups 
of $\cpx_i$. 
\begin{enumerate}
    \item $[\bo_d(\spx)] = 0$ in $\cpx_i$. This implies there is a $d$-chain $b$ such that 
    $\bo_d(b) = \bo_d(\spx)$. Therefore, $\bo_d(b+\spx)=0$. It is easily seen 
    that the cycle $z = b + \spx$ is not a boundary in $\cpx_{i+1}$. We say that the cycle 
    $b + \spx$ and the class $h = [b + \spx]$ are \emph{born} at time $i + 1$ 
    or at $\cpx_{i+1}$. It follows that $H_k(\cpx_{i+1}) = H_k(\cpx_i)$ for 
    $k \neq d$ and $H_d(\cpx_{i+1}) = H_d(\cpx_{i})\oplus([z])$, where $(x)$ 
    means the $\Z_2$-vector space generated by $x$. We take the cycle $z$ to be 
    the p-representative for the class $h$ in $\cpx_{i+1}$. Moreover, If $z'$ is a 
    (inductively defined) p-representative for a homology class of $\cpx_i$ we 
    transfer it to be the p-representative of its class in $\cpx_{i+1}$.
    
    \item $[\bo_d(\spx)] \neq 0$ in $\cpx_i$. In this case, adding the simplex $\spx$ causes 
    the class $h = [\bo(\spx)]$ to become trivial. In other words, each 
    $z \in [\bo(\spx)]$ is now a boundary and this class is merged with the 
    class $0$. 
    Since the p-representatives form a basis of homology, $h$ can 
    be written as a summation of these. The Elder Rule tells us that we declare 
    that the youngest p-representative in this representation \emph{dies} entering 
    $\cpx_{i+1}$. Any other class still can be written as summation of existing 
    p-representatives. Note that each p-representative now represents a possibly 
    larger class.
\end{enumerate}

For $0 \leq i \leq j \leq n$, the $d$-dimensional persistent homology group 
$H^{i,j}_d$ consists of the classes, in $H_d(\cpx_j)$, of those $d$-dimensional 
p-representatives which are born at or before $\cpx_i$. Therefore, the 
p-representatives persist through the filtration. At any $i$, they form a basis 
of the homology groups of $\cpx_i$, and their lifetime can be depicted using 
\emph{barcodes}. The \emph{persistence diagram} encodes the birth and death 
indices of p-representatives. Note that the non-trivial homology classes of 
$\cpx$ are born at some index but never die. From the above explanation the 
following can be observed. We omit the proof.

\begin{proposition}
    Let $h \in H_d(\cpx_i)$ be a homology class and assume $h = \sum t_j [z_j]$ 
    where $t_j \in \Z_2$ and the $z_j$ are p-representatives. Then 
    $\sum t_j z_j$ is a bottleneck optimal cycle for $h$ (with respect to the ordering giving rise to the filtration).
\end{proposition}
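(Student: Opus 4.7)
The plan is to identify the maximum simplex of $\sum t_j z_j$ explicitly as the birth simplex of the latest-born p-representative appearing in the sum, and then to show that no cycle homologous to $\sum t_j z_j$ can avoid using a simplex of at least that index. Let $k^*$ be the largest birth index among the indices of p-representatives $z_j$ with $t_j = 1$, let $z^*$ denote the corresponding p-representative, and let $\sigma^*$ be its birth simplex (the $d$-simplex added at step $k^*$).

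I would first compute $\bn_\wf(\sum t_j z_j)$ directly. By the construction of p-representatives recalled in the excerpt, $z^* = b^* + \sigma^*$ where $b^* \subset \cpx_{k^* - 1}$, so $\sigma^*$ is the unique simplex of $z^*$ with index $k^*$. Every other summand $z_j$ with $t_j = 1$ and $j \neq j^*$ was born at some earlier index $k_j < k^*$, hence by the same decomposition $z_j \subseteq \cpx_{k_j} \subseteq \cpx_{k^* - 1}$. Consequently $\sigma^*$ appears in the sum with coefficient $1$, no cancellation is possible, and every other simplex present in $\sum t_j z_j$ has weight strictly smaller than $\wf(\sigma^*)$. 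Thus $\bn_\wf(\sum t_j z_j) = \wf(\sigma^*)$.

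The core step is optimality: any cycle $z'$ in $\cpx_i$ homologous to $\sum t_j z_j$ must contain some simplex of index at least $k^*$, i.e.\ of weight at least $\wf(\sigma^*)$. Assume for contradiction that $z' \subseteq \cpx_{k^* - 1}$. Then $[z'] \in \Hom_d(\cpx_{k^* - 1})$ maps under the inclusion-induced homomorphism to $\sum t_j [z_j] \in \Hom_d(\cpx_i)$, so $\sum t_j [z_j]$ lies in the image of $\Hom_d(\cpx_{k^* - 1}) \to \Hom_d(\cpx_i)$, which is the persistent homology group $\Hom_d^{k^* - 1, i}$. By the description of the p-representatives in the filtration, this image is spanned in $\Hom_d(\cpx_i)$ by classes of p-representatives born at or before step $k^* - 1$ (possibly after having absorbed dead classes). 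The class $[z^*]$, born at step $k^*$, is a basis element of $\Hom_d(\cpx_i)$ linearly independent from these; since $t_{j^*} = 1$, the sum $\sum t_j [z_j]$ has a nonzero $[z^*]$-component and therefore cannot lie in the image, a contradiction.

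The main obstacle is the bookkeeping for death events: the argument relies on the fact that the image of the inclusion map in homology is generated precisely by those p-representatives whose birth index is at most $k^* - 1$ (together with merged dead classes) and in particular does not contain $[z^*]$. This is where the Elder Rule and the invariance of the p-representatives under transfer through the filtration, as recalled in the excerpt, need to be invoked carefully; once that is in place, combining the upper bound from the second paragraph with the lower bound from the third yields $\bn_\wf^*([\,\sum t_j z_j\,]) = \wf(\sigma^*) = \bn_\wf(\sum t_j z_j)$, proving that $\sum t_j z_j$ is bottleneck-optimal.
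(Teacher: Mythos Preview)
The paper does not actually give a proof of this proposition: it writes ``From the above explanation the following can be observed. We omit the proof.'' Your argument correctly supplies the details the paper leaves implicit, and it follows exactly the line of reasoning the surrounding discussion sets up: the bottleneck norm of $\sum t_j z_j$ is the weight of the birth simplex $\sigma^*$ of the latest-born summand, and optimality follows because the image of $H_d(\cpx_{k^*-1}) \to H_d(\cpx_i)$ is spanned by classes of p-representatives born at or before $k^*-1$, a subspace not containing $[z^*]$. This last fact is precisely what the paper states just before the proposition (``the $d$-dimensional persistent homology group $H^{i,j}_d$ consists of the classes, in $H_d(\cpx_j)$, of those $d$-dimensional p-representatives which are born at or before $\cpx_i$''), so your appeal to it is exactly the intended one; the Elder-Rule bookkeeping you flag as an obstacle is already absorbed into that statement and need not be re-argued.
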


Notice that there is a choice of $b$ in the first case of the case analysis above.
In general, the p-representatives are not lex-optimal cycles. However, if we 
choose $b$ to be lex-optimal the p-representatives form a lex-optimal basis. 
This set of basis elements can be computed using the persistent homology boundary 
matrix reduction algorithm~\cite{EdHa2010}, as shown in~\cite{cohensteiner2020}. 
This algorithm runs in $O(m^2\ell)$ time where $\ell$ is the number of 
$d$-simplices and $m$ is the number of $d+1$ simplices.
Also using this basis, a lex-optimal cycle can be computed in any given class in 
$O(\ell^2)$ time~\cite{cohensteiner2020}. Of course, these algorithms also compute 
a bottleneck optimal cycle for any given homology class.

%------------------------------------------------------------------------------
\section{An efficient algorithm for 2-dimensional manifolds}
\label{s:algorithm}
%------------------------------------------------------------------------------

In this section, we present a simple algorithm that, given a combinatorial 
2-manifold $\cpx$, weights on the edges, and a 1-dimensional homology class, 
computes a lex-optimal representative cycle in the given class. For simplicity, 
we consider only orientable manifolds without boundary.

Our input $\cpx$ is an edge-weighted, orientable combinatorial 2-manifold, 
therefore, $S := |\cpx|$ is an orientable surface without boundary. Let $m$ be 
the complexity of $\cpx$. Let $z$ be an input cycle on the 1-skeleton. 
Note that if we want an input cycle $z^S$ in $S$ and not in $\cpx$, i.e. the 
cycle is not on the 1-skeleton, then we can compute an homologous cycle $z$ 
on the 1-skeleton with less than $m$ edges in $O(\ell)$ time, where $\ell$ is 
the number of intersections of $z^S$ with the edges of $\cpx$. 

We first construct a minimum spanning tree $T$ of the 1-skeleton of $\cpx$ 
with respect to the given weights. Let $G$ be the dual graph of the 1-skeleton 
of $\cpx$. The weight of an edge in $G$ is equal to the weight of its 
corresponding dual in $\cpx$. Let $Q$ be the maximum spanning co-tree of $\cpx$ 
in $G$ and let $Q^*$ be the edges of $\cpx$ whose duals are in $Q$. As shown 
in~\cite[Lemma 1]{eppstein2003} $T$ and $Q^*$ are disjoint. Let $L$ be the 
edges that are not in $T$ nor in $Q^*$, and recall that the triple $(T,Q,L)$ 
determines a polygonal schema $P$ of $4g$ sides for $S$, where $g$ is the genus 
of the surface. See Figure~\ref{fig:polygon} as example. This means that if we
cut the surface at $T \cup L$ we obtain a disk $D$, and there is an 
identification map $g: D \rightarrow S$ which will ``re-glue'' the disk into a 
surface. Each edge of $T \cup L$ appears twice around the disk, and each edge 
of $Q^*$ is a diagonal of this disk, connecting two vertices of the disk. The 
cutting of the edges of $T \cup L$ and computing the disk $D$ can be done in 
linear time. The two vertices of the disk that the edges in $Q^*$ connect can 
also be computed in linear time, using previous work on computing the minimal 
homotopic paths~\cite{DeyGuha99,Erickson2013}.

\begin{figure}
    \centering
    \includegraphics[width=0.33\textwidth]{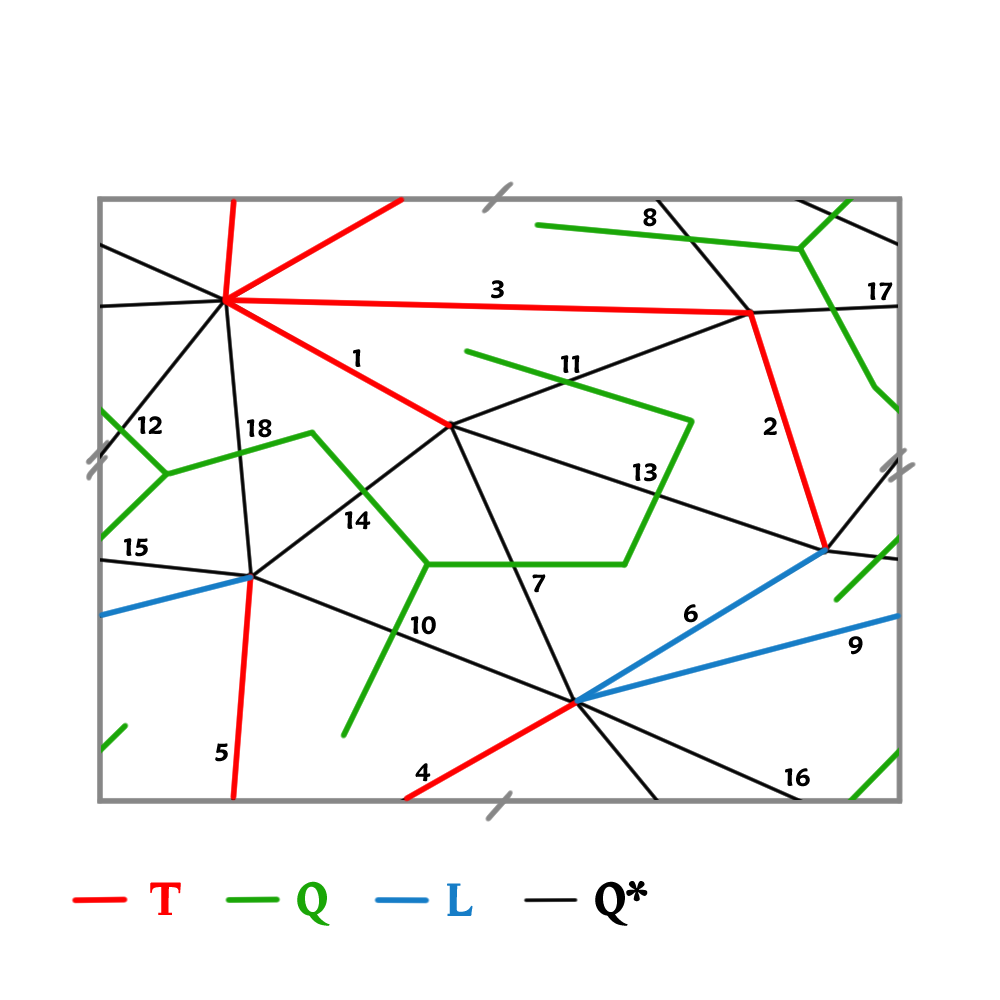}
    \includegraphics[width=0.33\textwidth]{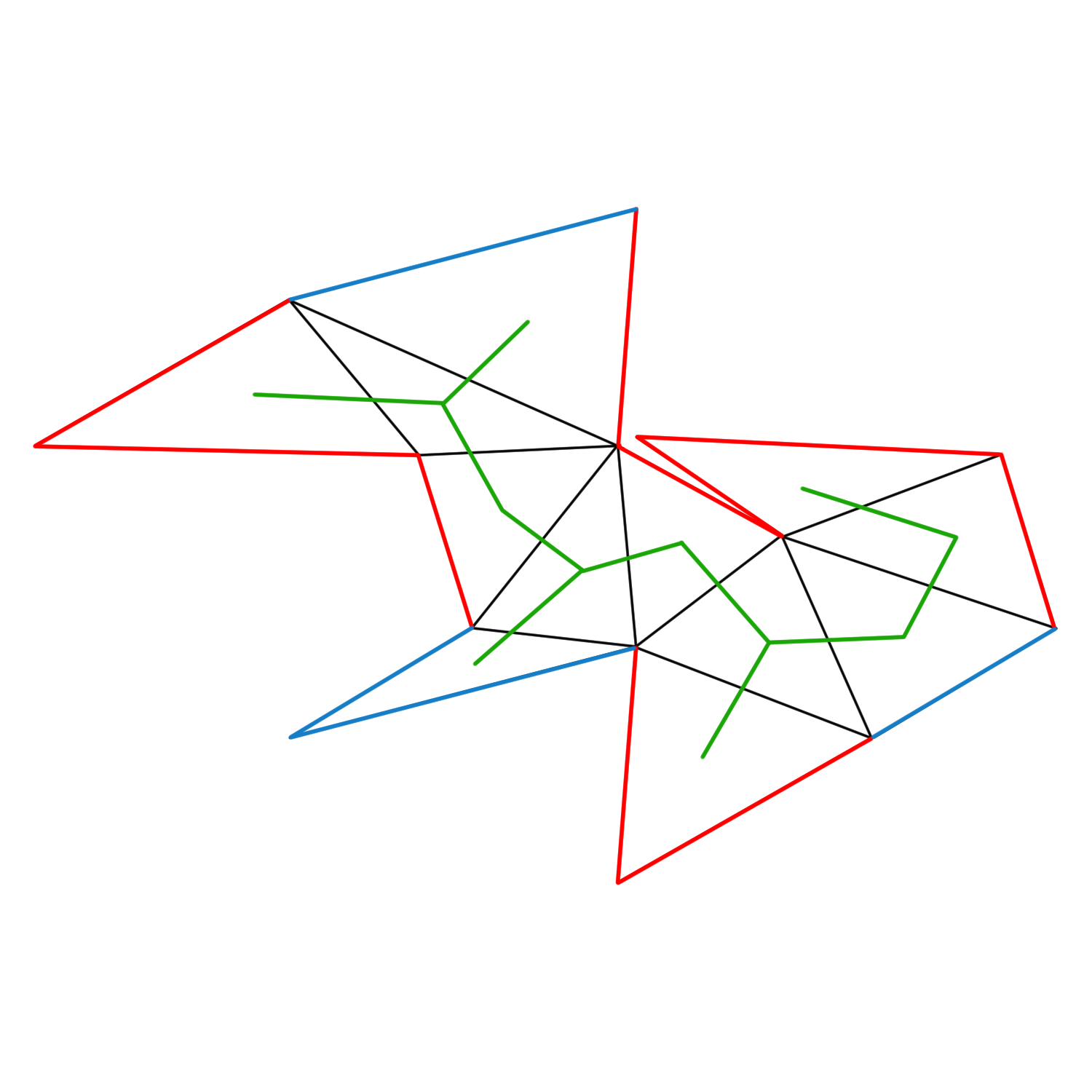}
    \includegraphics[width=0.3\textwidth]{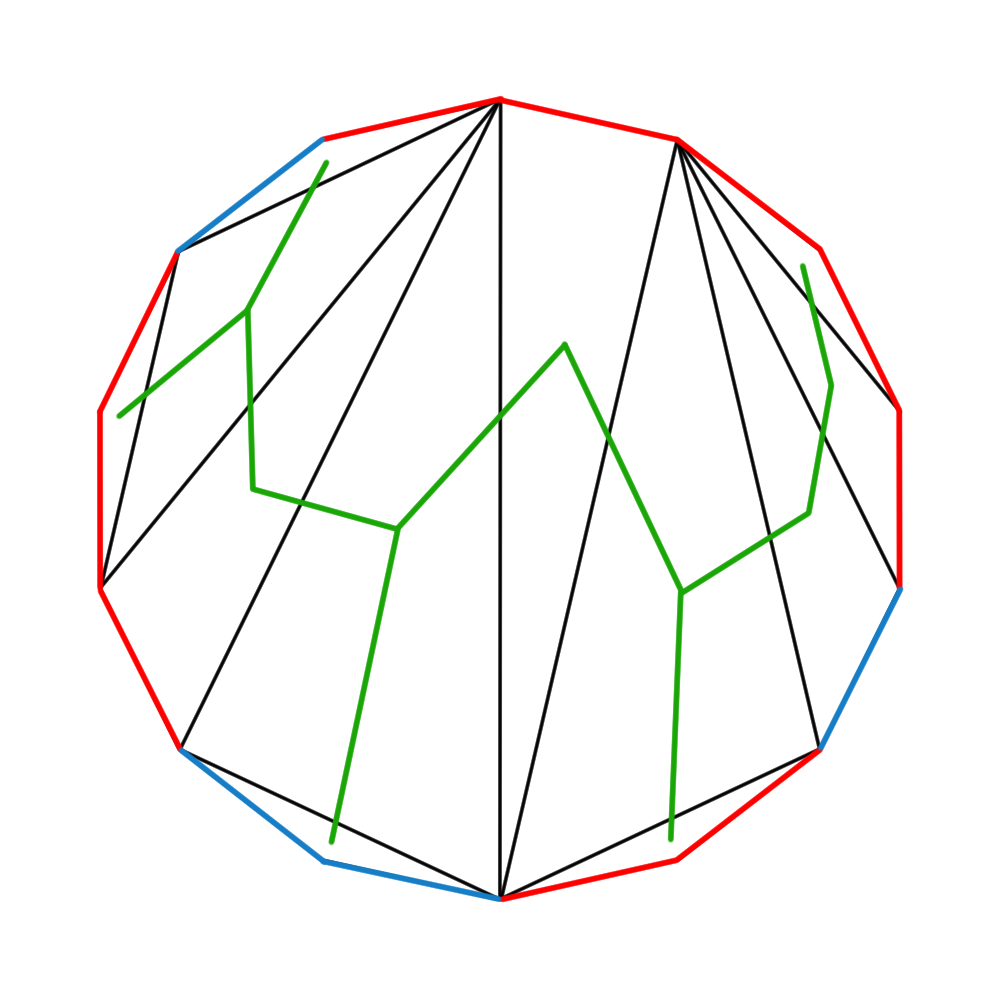}
    \caption{From left to right: a tree co-tree decomposition of a weighted, 
    triangulated torus (with spanning tree shown in red, co-tree shown in green, 
    and the set $L$ in blue); the same torus cut along $T \cup L$; a redrawing 
    of the resulting polygon.}
    \label{fig:polygon}
\end{figure}

During our algorithm, we maintain a data structure $\mathcal{Z}$ which stores a 
circular list of elements of $\Z_2$. The circular list contains a node for each 
boundary edge of the disk $D$.
Note that any edge of $T \cup L$ corresponds to two edges on the boundary of $D$ 
and thus two nodes of $\mathcal{Z}$. 

%\rewI{The following sentence is even harder to understand, with many terms not properly defined, nor examples/figures given. All these make the algorithm that follows hard to read. For example, why suddenly there is a cycle 'z' which is not defined?}

\subparagraph{Algorithm} 
We compute the lex-optimal cycle $z^*$ in the homology class of the input 
cycle $z$ as follows: 
We start with every node of $\mathcal{Z}$ at value~$0$. Then, for every edge $e$ in 
$z$ in $T \cup L$, we set one of the two nodes corresponding to $e$ to~$1$ and 
keep the other one at~$0$. Finally, for all remaining edges $e$ in $z$, which 
therefore are in $Q^*$, let $a(e)$ be one vertex and $b(e)$ be the other vertex 
which $e$ connects in $D$. We add~$1$ to any node whose corresponding edge of 
$\partial D$ is between $a(e)$ and $b(e)$ in clockwise order. 
At the end, we define the cycle $z^*$ to be the cycle consisting of edges whose 
two corresponding nodes in $\mathcal{Z}$ sum to~$1$.

%\hannah{Are $a(e)$ and $b(e)$ really defined arbitrarily or is it important that the path on the boundary between them in clockwise order is the shortest of the two possibilities?}

\subparagraph{Implementation of $\mathcal{Z}$}
The data structure $\mathcal{Z}$ has a single modifying operations: adding a value~$1$ to any
node between two given nodes (inclusive) in clockwise order. In brief, to get a constant-per-operation run-time we accumulate the operations and update the data structure in a single pass. We give now more detail. $\mathcal{Z}$ consists of an array $A$, whose cells are denoted by `nodes' to 
avoid any confusion with complex cells. Each node represents an edge of the 
boundary $\partial D$ of $D$ in the right order.
For any edge $e$ of $z$ in $Q^*$, let $a'(e)$ be the first edge on the 
clockwise path between $a(e)$ and $b(e)$ in $\partial D$ and $b'(e)$ the last 
edge. Additionally to a 0-1 value, each node $c$ in $A$ stores two values 
$s(c)$ and $f(c)$, where $s(c)$ resp. $f(c)$ is the number of edges $e$ of 
$z$ in $Q^*$ whose $a'(e)$ resp. $b'(e)$ corresponds to $c$. For each $e$, the 
cost of updating these two numbers is constant. The final cycle can be computed 
by first computing the value of the first node $A[0]$ and then walking along 
$A$ and updating the value as $A[i] = A[i-1] + s(A[i]) - f(A[i-1])$.

%\hannah{Is this last equation trivial?}

\subparagraph{Correctness}
Let $L = \{\ell_1, \ldots, \ell_{2g}\}$, where the $\ell_i$'s are sorted by 
increasing weight. 
Each edge $\ell_i$ defines a unique cycle when added to the tree $T$, let these 
cycles be denoted by $\Lambda = \{\lambda_1, \ldots, \lambda_{2g}\}$. 
The following lemma is the key to our algorithm's correctness.

\begin{restatable}{lemma}{qedges}
\label{l:qedges}
    Let $q \in Q^*$.  Then there is a 1-chain $c <_L q$ in $T \cup L$, and a 
    2-chain $d$ such that $\bo d = q + c$.
\end{restatable}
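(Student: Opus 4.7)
The plan is to work in the dual graph and exploit the fact that $Q$ is a \emph{maximum} spanning tree there, while $T$ is a \emph{minimum} spanning tree of the primal $1$-skeleton. Given $q \in Q^*$, let $q^*$ denote its dual edge, which by definition lies in the dual spanning tree $Q$. Removing $q^*$ from $Q$ breaks $Q$ into two subtrees; let $F$ be the set of dual vertices (equivalently, primal triangles) in one of these two components, and take the candidate $2$-chain to be $d := \sum_{\sspx \in F}\sspx \in C_2(\cpx)$.

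Next I would compute $\bo d$ explicitly. Over $\Z_2$, an edge $e$ appears in $\bo d$ iff exactly one of the two triangles incident to $e$ lies in $F$, i.e.\ iff $e^*$ has exactly one endpoint in $F$. The dual edges with this property are precisely $q^*$ together with some non-tree edges of $Q$; since the non-tree edges of the dual graph are by construction exactly the duals of the edges in $T \cup L$, I conclude that $\bo d = q + c$, where $c$ is a $1$-chain supported on $T \cup L$. This gives the second half of the lemma.

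The heart of the argument is showing $c <_L q$, which (since $q$ is a single edge and $c$ does not contain $q$) reduces to proving that every edge $e$ appearing in $c$ satisfies $\wf(e) < \wf(q)$. Fix such an $e \in T \cup L$; its dual $e^*$ is not in $Q$, and by construction its two dual endpoints lie on opposite sides of the cut induced by removing $q^*$. Equivalently, the unique $Q$-path connecting the endpoints of $e^*$ must traverse $q^*$, so $q^*$ belongs to the fundamental cycle of $e^*$ with respect to $Q$. Because $Q$ is a maximum spanning tree of the dual graph (with the same weights as the primal), the non-tree edge $e^*$ is the minimum-weight edge on its fundamental cycle; in particular $\wf(e^*) < \wf(q^*)$, which by the definition of dual weights gives $\wf(e) < \wf(q)$, as required.

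The main obstacle I anticipate is the bookkeeping in the second step: carefully checking that the boundary computation does not pick up any stray edge from $Q^*$ itself beyond $q$. This follows from the fact that every edge of $Q^* \setminus \{q\}$ has its dual inside $Q \setminus \{q^*\}$, hence inside a single connected component of that subforest, so both incident triangles lie on the same side of the cut and the edge cancels in $\bo d$ over $\Z_2$. Once this is in hand, the matroid/exchange argument above closes the proof.
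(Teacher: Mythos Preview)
Your proof is correct. The $2$-chain you construct coincides with the paper's: one component of $Q\setminus\{q^*\}$ is precisely the set of triangles on one side of the diagonal $q$ in the polygon obtained by cutting along $T\cup L$, so your $d$ equals the paper's $\tilde d_1$ and your $c$ is the paper's $\tilde p_1$.

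Where you diverge is in the proof that $c <_L q$. The paper argues by cases on whether $\tilde p_1$ lies entirely in $T$: if so, it invokes the minimum-spanning-tree property of $T$ in the primal; if not, it first shows via a cotree-exchange argument that each $L$-edge surviving in $\tilde p_1$ is smaller than $q$, and then proves a separate auxiliary claim that every $T$-edge of $\tilde p_1$ is dominated by some such $L$-edge. You bypass this case split entirely by working uniformly in the dual: for \emph{any} $e\in c$ (whether in $T$ or in $L$), the dual edge $e^*$ is a non-tree edge whose fundamental cycle in $Q$ contains $q^*$, and the maximum-spanning-tree cycle property gives $\wf(e)=\wf(e^*)<\wf(q^*)=\wf(q)$ directly. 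This is strictly shorter and more conceptual. The paper's route, by contrast, keeps the polygonal-schema picture in the foreground, which dovetails with the description of the data structure $\mathcal{Z}$ that follows.
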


\begin{proof}
    The union of the edges in $T$ and $L$ form a cut graph $G$ of the surface, 
    in the sense that the closure of $S-|G|$ is a topological disk $D$. Every 
    edge of $T \cup L$ appears twice on the boundary of $D$, and any $q \in Q^*$ 
    is a diagonal in the polygon $D$. Let $p_1$ and $p_2$ be the two arcs such 
    that $\bo D = p_1 \cup p_2$ and the endpoints of $p_1$ and $p_2$ coincide 
    with those of $q$. Let $\tilde{p}_i \in C_1(\cpx)$ be the 1-chain 
    corresponding to $p_i$, $i = 1,2$, i.e., $\tilde{p}_i = g_\sharp(p_i)$. 
    Recall that $g_\sharp$ is the induced map on chain groups.
    Let $d_1$ be the 2-chain bounded by $p_1$ and $q$ and let 
    $\tilde{d}_1 = g_\sharp(d_1)$. We have $\bo \tilde{d} = q + \tilde{p}_1$, 
    where by $q$ we denote this edge in $D$ and $S$.
    We now claim that every edge in $\tilde{p}_1$ is smaller than $q$. Note that 
    $\tilde{p}_1$ is a chain of $T \cup L$. We consider two cases. First, assume 
    $\tilde{p}_1$ consists only of edges of $T$. In this case, it equals the 
    unique path in $T$ defined by the endpoints of $q$. Since $T$ is a minimum 
    spanning tree our claim is proved.

    Second, assume that $\tilde{p}_1$ is not entirely in $T$. In this case we 
    argue as follows. Let $\ell \in L$ and let $\ell_1$ and $\ell_2$ be the two 
    copies of $\ell$ on $\bo D$. We claim that if $\ell_1$ and $\ell_2$ are on 
    both of the arcs $p_1$ and $p_2$ (that is, if $\ell_1 \in p_1$ and 
    $\ell_2 \in p_2$ or $\ell_2 \in p_1$ and $\ell_1 \in p_2$) then $\ell < q$. 
    Assume for the sake of contradiction that $\ell > q$. Under these conditions, 
    if we remove the dual of $q$ from $Q$ and add the dual of $\ell$, we have 
    reconnected the spanning co-tree split by removing $q$ (since the effect of 
    removing $q$ from $Q$ is adding it to $L$ and thus cutting the disk $D$ at 
    $q$ while the effect of adding $\ell$ to $Q-\{q\}$ is merging the resulting 
    disks at $\ell_1$ and $\ell_2$, thus again forming a single disk). Thus we 
    have increased the weight of the spanning co-tree which is not possible. 
    Therefore, $\ell < q$ or the two copies of $\ell$ appear on one of $p_1$ or 
    $p_2$. It follows that every edge of $L$ (which appears once) in 
    $\tilde{p}_1$ is smaller than $q$ (since appearing twice cancels an edge).
    To finish the proof in this case, we 
    claim that for every edge $t \in T \cap \tilde{p}_1$ there is an edge 
    $\ell \in L \cap \tilde{p}_1$ such that $t < \ell$. It then follows that 
    $\tilde{p}_1 < q$. 

    To prove the claim we argue as follows. $T \cup L$ is a graph on the 
    1-skeleton of $K$ and it is standard and easy to show that any homology 
    class $0 \neq h \in H_1(K)$ contains exactly one cycle of $T \cup L$. Let 
    $z = q + x$ be the cycle formed by adding $q$ to $T$, where $x$ is the path 
    on $T$. We have 
    $z + \bo{\tilde{d}_1} = z + \tilde{p}_1 + q = \tilde{p}_1 + x =: z'$ and 
    $z'$ is a non-empty cycle in $T \cup L$ (If $z'$ were empty then 
    $\tilde{p}_1 = x$, this is not possible since $x$ is in $T$ and 
    $\tilde{p}_1$ is not in $T$ by assumption). Thus $z'$ can also be written 
    as a non-empty summation of the $\lambda_i$. Each $\lambda_i$ has the 
    property that its unique edge $\ell_i \in L$ is larger than its edges in 
    $T$. Since these $\ell_i$ are never cancelled, it follows that for each 
    edge $t \in z' \cap T$ there is an edge $\ell \in z' \cap L$ such that 
    $t < \ell$. Since the $L$-edges are in $\tilde{p}_1$ and not in $x$ our 
    claim is proved.
\end{proof}

%\hannah{The following paragraph needs to be rewritten. Without the proof in the main part, it does not make much sense (what is $p_1$, $\sharp$ etc.). And it does not give a good intuition of what happens in the proof.}

With some abuse of notation we also denote the chain on $\bo D$ defined by the 
nodes of $\mathcal{Z}$ with value 1 by $\mathcal{Z}$. Note that in the beginning 
of the algorithm $g_\sharp({\mathcal{Z}}) = z$. The algorithm then repeatedly 
updates $\mathcal{Z}$ by adding the chain $p_1$ returned by the above lemma to 
$\mathcal{Z}$ and adding the chain $\tilde{p}_1 = g_\sharp(p_1)$ to $z$. It updates $\mathcal{Z}$ such that at any time 
$g_\sharp({\mathcal{Z}}) = z$. 
%\rewII{The chain which is abusively denoted by Z is initially not homologous to the input cycle z. The edges of z which are in Q* should be added at each step of the algorithm. What’s more, only initially is this equal to z (under g) to z on the level of chains. Later on it’s only equal on the level of homology, so the induced map should also be on homology (rather than chains)}
To finish the proof of correctness, it remains to show that the final cycle, 
namely the unique cycle $z_*$ of $T \cup L$ in the class of $z$, is indeed the 
lex-min cycle. To see this, assume on the contrary that there is a cycle $y$ such 
that $[y] = [z]$ and $y < z_*$. Since there is a unique cycle of any class in 
$T \cup L$, $y$ has to contain an edge of $Q^*$ hence can be made smaller, which 
contradicts minimality of $y$. Therefore, the cycles of $T \cup L$ are indeed 
the lex-min representatives of homology classes.
%Appendix~\ref{appendixA} for the proof.

\begin{restatable}{theorem}{lexoptalg}
    Let $\cpx$ be a simplicial complex which is a closed orientable combinatorial 
    2-manifold and let $m$ be its number of simplices. There is an algorithm that 
    computes a lex-optimal basis for the 1-dimensional homology of $\cpx$ in 
    $O(m \log(m))$ time. Moreover, we can compute a 
    lex-optimal representative for any given 1-homology class within the same run-time.
\end{restatable}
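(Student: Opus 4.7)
The plan is to separately establish (i) correctness of the single-class algorithm just described, (ii) correctness of a lex-optimal basis extracted from the tree-cotree decomposition, and (iii) the claimed $O(m \log m)$ runtime.

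For (i), I would argue by induction that the invariant $g_\sharp(\mathcal{Z}) = z$ is maintained throughout the processing, where $z$ is initially the input cycle and is subsequently modified by adding chains $\tilde{p}_1$ guaranteed by Lemma~\ref{l:qedges}. Each such modification strictly decreases $z$ lexicographically (since $\tilde{p}_1 <_L q$) and eliminates a $Q^*$-edge without introducing any edge of weight $\geq \wf(q)$. After all $Q^*$-edges of the input are processed, the resulting chain $g_\sharp(\mathcal{Z})$ lies entirely in $T \cup L$. To conclude, I would verify that the unique cycle $z_*$ of $T \cup L$ in the class $[z]$ is lex-minimal: any candidate $y <_L z_*$ homologous to $z$ but not contained in $T \cup L$ must contain a $Q^*$-edge, and Lemma~\ref{l:qedges} would allow us to replace that edge by a lex-smaller chain, contradicting the choice of $y$.

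For (ii), I would consider the $2g$ fundamental cycles $\Lambda = \{\lambda_1, \ldots, \lambda_{2g}\}$, where $\lambda_i$ is the cycle formed by adding $\ell_i \in L$ (ordered by increasing weight) to $T$. Each $\lambda_i$ lies in $T \cup L$, so by (i) it is the lex-min representative of $[\lambda_i]$. To see that $\Lambda$ is a lex-optimal basis, I would use a matroid-style exchange argument: since every cycle of $T \cup L$ decomposes uniquely as a sum of the $\lambda_j$'s, any class outside the span of $[\lambda_1], \ldots, [\lambda_{i-1}]$ has a lex-min representative whose maximum edge is some $\ell_j$ with $j \geq i$; hence $\lambda_i$, whose maximum edge is exactly $\ell_i$, is the greedy lex-min choice at stage $i$.

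For (iii), the minimum spanning tree of the 1-skeleton and the maximum spanning co-tree can each be computed in $O(m \log m)$ time with Kruskal's algorithm (disjointness of $T$ and $Q^*$ follows from~\cite[Lemma 1]{eppstein2003}); the polygonal schema and the endpoints $a(e), b(e)$ on $\partial D$ for each $e \in Q^*$ can be read off in linear time using the cited work on minimal homotopic paths~\cite{DeyGuha99,Erickson2013}. A single query cycle is then processed in $O(m)$ via the prefix-sum scheme for $\mathcal{Z}$, and the basis $\Lambda$ is output implicitly as the $2g$ edges of $L$ together with $T$ (tree paths reconstructed on demand), giving the overall $O(m \log m)$ bound.

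The main obstacle I anticipate is verifying that the accumulator-based implementation of $\mathcal{Z}$ faithfully realizes iterated applications of Lemma~\ref{l:qedges}: one needs to check that toggling the clockwise arc of $\partial D$ from $a(e)$ to $b(e)$ actually realizes the correct arc $\tilde{p}_1$ rather than $\tilde{p}_2$, that the paired occurrences of each $T \cup L$-edge on $\partial D$ cancel correctly when reassembled by $g_\sharp$, and that the amortized $O(1)$ per edge cost holds even when many $Q^*$-edges have overlapping arcs. Checking these combinatorial details carefully is the most delicate portion of the argument.
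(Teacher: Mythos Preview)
Your proposal is correct and follows essentially the same approach as the paper: the single-class correctness via Lemma~\ref{l:qedges} and uniqueness of the $T\cup L$ representative, the inductive (greedy) argument that $\Lambda$ is a lex-optimal basis, and the $O(m\log m)$ runtime dominated by the spanning-tree/co-tree computations all match the paper's proof. Your flagged obstacle about choosing $p_1$ versus $p_2$ is in fact a non-issue: since $g_\sharp(\partial D)=\partial g_\sharp(D)=0$ on a closed surface, we have $\tilde p_1=\tilde p_2$, and at the level of $\mathcal{Z}$ toggling one arc versus the other differs by toggling all of $\partial D$, which flips both nodes of every edge and hence leaves each parity sum unchanged.
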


\begin{proof}
    We have proved that the algorithm correctly computes the lex-optimal cycle 
    homologous to $z$. We show that the basis $\Lambda$ is lex-optimal basis. 
    First note that by Lemma~\ref{l:qedges} every non-trivial cycle $y$ is 
    homologous to a cycle $y' \leq_L y$ such that $y'$ is a subset of 
    $T \cup L$. Since these must contain some $\ell_i$, it follows that the 
    smallest non-trivial cycle contains only $\ell_1$ and edges of $T$, and 
    hence is $\lambda_1$.
    
    Assume inductively that $\Lambda_i = \{ \lambda_1, \ldots, \lambda_i \}$ is 
    a lex-optimal basis for the vector space $(\Lambda_i) \subset H_1(\cpx)$. 
    We claim that $\lambda_{i+1}$ is the smallest cycle in classes in the set $H_1-(\Lambda_i)$. Consider any non-trivial cycle $y$ and decrease it to $y'$ as above. 
    To see that $\lambda_{i+1}$  is the smallest cycle,
    note that $y' \cap L$ must contain some $l_j$ 
    larger than $\lambda_i$, since otherwise $[y]\in (\Lambda_i)$; the 
    smallest cycle with this property is $\lambda_{i+1}$.
    
    Constructing $T$, the dual graph and $Q$ takes at most $O(m \log m)$ time.
    Since we perform one update operation on $\mathcal{Z}$ per edges of $Q$ 
    the total running time is $O(m \log m)$.
    
    \end{proof}

%------------------------------------------------------------------------------
\section{Reductions}\label{s:reductions}
%------------------------------------------------------------------------------

In this section, we first reduce solving a system of linear equation $Ax = b$, 
with $A$ sparse, to computing the bottleneck-optimal homologous cycle 
problem for a 3-manifold given as a subset of the Euclidean 3-space. We 
then use this reduction to deduce hardness results for similar homological 
computations for 3-manifolds and 2-complexes in 3-space. Due to space 
constraints, some proofs of this section can only be found in 
the full version of the paper.
%Appendix~\ref{appendixB}.

Let $A = (a_{ij})$, $i,j \in \{1,\ldots,n\}$, be an $n \times n$ square matrix 
with values in $\Z_2$. 
%\hannah{That the figure does not use $\Z_2$ is a bit confusing...}
Let $A_i$ denote the $i$-th column, and $A^t_i$ denote 
the $i$-the row of $A$. Let $x = (x_1, \ldots, x_n)$ be the vector of the $n$ 
variables of the system $Ax = b$, and $b = (b_1, \ldots, b_n)$.

\begin{figure}
    \centering
    \includegraphics[scale=0.65]{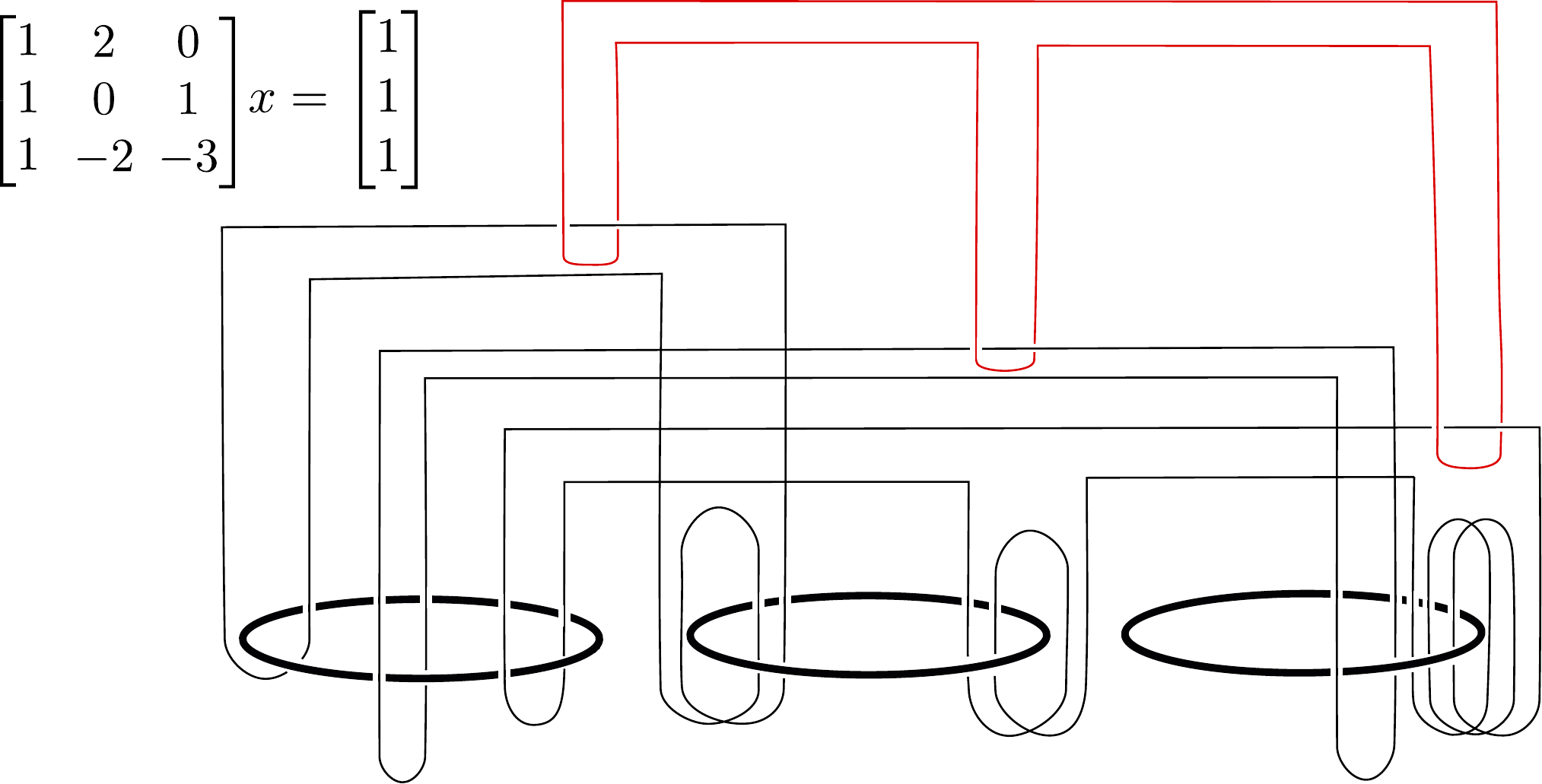}
    \caption {This figure is a link diagram for the reduction of the indicated 
    linear system. The thick round circles are the $\lambda_i$, and the link 
    components are drawn in thin black and red. For all of the crossings between 
    $\Ld_i$, the vertical strand goes over the horizontal strand. In other words, 
    these are not linked with each other. Note that in this example we are using 
    integer matrix and integer linking number to showcase how the reduction works 
    for integers. In the text we are concerned with 0-1 matrices. One also sees 
    here how the need for generating large linking numbers (in absolute value) 
    increases the complexity of the reduction.}
    \label{fig:reduction}
\end{figure}

From the given system $Ax = b$, we first construct a link diagram $\Ld'$.
We start by drawing $n$ round circles in the plane, whose collection we denote 
by $\Lambda' = \{\lambda'_1, \ldots, \lambda'_n \}$; see the thick circles in 
the Figure~\ref{fig:reduction} for an illustration. For each row $A^t_i$ of $A$, 
we draw a component of the link $\Ld'$, denoted $L'_i$, such that its linking 
number is non-zero with $\lambda'_j$ if and only if $a_{ij} = 1$; this can be 
accomplished simply by linking $L_i'$ appropriately with $\lambda'_i$ depending 
on the value of $a_{ij}$. As we wish the $L_i$'s to not link with each other, 
any crossings between a fixed $L_i$ and $L_j$ are simply set to be all over 
(or all under), so that they will remain unlinked. Again, we refer to 
Figure~\ref{fig:reduction}, where example knots $L'_1, L_2'$, and $L_3'$ are 
depicted by thin black lines. We add one final knot, which we denote as 
$\zeta'$, to the link $\Ld'$ so that  its linking number with $\lambda_i'$ is 
non-zero if and only if $b_i = 1$; this can be accomplished by linking $\zeta'$ 
once with each $L_i'$. See the top knot shown in red in 
Figure~\ref{fig:reduction} for an illustration.

\begin{restatable}{lemma}{linkcrossings}
    Let $A$ be such that each row of $A$ has at most $c$ non-zero entries. 
    Then the link diagram $\Ld'$ has $O(cn^2)$ crossings.
\end{restatable}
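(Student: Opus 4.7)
The plan is to exhibit an explicit drawing of $\Ld'$ and count crossings by category. I would place the $n$ round circles $\lambda'_1,\ldots,\lambda'_n$ as disjoint small ovals along a horizontal line, so they contribute no crossings among themselves. For each $i$, I would then draw $L'_i$ as a \emph{comb} at vertical level $y=i$: a horizontal bar at height $i$ together with at most $c$ descending teeth, one per non-zero entry $a_{ij}=1$ in the row $A^t_i$, each tooth passing through the corresponding $\lambda'_j$ to realise linking number $1$. Finally, I would draw $\zeta'$ to the side of this arrangement so that it reaches and links each $L'_i$ for which $b_i=1$, contributing only a linear number of further crossings.

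With this drawing fixed, I would count the crossings in four groups. First, among the $\lambda'_j$'s there are no crossings, since the circles are disjoint. Second, each of the at most $c$ teeth of a single $L'_i$ contributes exactly two crossings with its target $\lambda'_j$, giving $O(cn)$ crossings between combs and reference circles in total. Third, for any pair $i<i'$, each of the at most $c$ teeth of $L'_{i'}$ must descend past the horizontal bar of $L'_i$ to reach the row of $\lambda'_j$'s and then return upward, contributing a constant number of crossings per tooth; summing over the $\binom{n}{2}$ such pairs yields $O(c n^2)$ comb--comb crossings. By the stated convention that vertical strands pass over horizontal ones, none of these crossings introduce linking between distinct $L'_i$'s. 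Fourth, the curve $\zeta'$ links at most $n$ of the $L'_i$'s and passes by at most $n$ of the $\lambda'_j$'s, adding at most $O(n)$ further crossings.

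Summing the four contributions, the third group dominates and the total count is $O(c n^2)$, as claimed. The main thing to verify is that the comb drawing can indeed be realised so that the linking numbers are exactly as prescribed by $A$ and $b$, while the over/under convention at comb--comb crossings eliminates any spurious linking among the $L'_i$'s; this is a routine check once the drawing is fixed, because linking number between two components is determined by the signed crossings between them, and the comb teeth are arranged so that each target circle is threaded exactly once with the intended sign while each bar is crossed an even number of times with cancelling over/under pattern.
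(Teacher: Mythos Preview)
Your proposal is correct and follows exactly the construction sketched in the paper's text and Figure~\ref{fig:reduction}: the $L'_i$ are laid out as horizontal ``combs'' with at most $c$ vertical teeth each, so the dominant contribution comes from teeth of higher combs crossing the bars of lower ones, giving $O(c)$ crossings per pair and $O(cn^2)$ in total. The paper defers the actual proof of this lemma to its full version, so there is no in-text argument to compare against, but your layout and crossing count match the intended construction.
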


In the next step, we construct a spatial link $\Ld$ from the link diagram 
$\Ld'$, such that the knots appear in the 1-skeleton of a triangulation of a 
3-ball. This is standard and can be done in $O(m \log(m))$-time where $m$ is 
the number of crossings of the link diagram $\Ld'$~\cite[Lemma~7.1]{HLP99}. 
The resulted space has complexity $O(m)$. Our diagram has $O(n^2)$ many 
crossings, therefore this construction takes $O(n^2 \log{n})$ time and we 
obtain triangulation of a ball with complexity $O(n^2)$. The spatial link $\Ld$ 
corresponding to $\Ld'$ is a set of disjoint simple closed curves in the 
1-skeleton of a triangulation of a 3-ball $B^3$. We denote the spatial knots 
corresponding to $L'_i$ by $L_i$, and analogously we name other components of 
$\Ld$. 

Consider the sub-link $\mathcal{N}$ of $\Ld$ consisting of the components $L_i$. 
We define the manifold $\mf$ to be the link-complement of the link 
$\mathcal{N}$. This link-complement, by definition, is obtained by removing the 
interior of a thin polyhedral tubular neighborhood 
%\rewII{Regular neighbourhood} 
of each component of $\mathcal{N}$. This construction is again 
standard, and a triangulation of $\mf$ can be constructed in linear time form the 
spatial link~\cite{HLP99}. Therefore, the 3-manifold $\mf$ is a subset of a 3-ball 
$B^3$, and has $n$ boundary components. By extra subdivisions, if necessary, 
we can make sure that in the interior of $\mf$, $\zeta$, $\lambda_i$ and $b$ are 
simple, disjoint, closed curves in the 1-skeleton. To do this, it is enough to 
make sure this property holds in every tetrahedron.

The cycle $\zeta$ is the input cycle in our instance of the bottleneck-optimal 
homologous cycle problem. We still need to define our edge weights, which will 
be based on an ordering of the edges of $\mf$. Let $\{e_{ij}\}$ be the set of 
edges in the cycle $\lambda_i$. First, we make sure that every edge not in some 
$\lambda_i$ is larger than any $e_{ij}$. Second, if $i < i'$, we make sure that, 
for all $j,j'$, $e_{ij}$ is smaller than $e_{i'j'}$. This finishes construction
of our problem instance.

Let $\mu_i$, $i \in \{ 1, \ldots, n \}$, be meridians of the knots $L_i$ in 
$\mf$. This is a circle on the boundary component of $\mf$ corresponding to 
$L_i$. It is well-known that the homology group $H_1(\mf)$ is a 
$\Z_2$-vector space  with the basis $\{[\mu_1], \ldots, [\mu_n]\}$ isomorphic 
to $\Z_2^{n}$. 

\begin{restatable}{lemma}{reduction}
\label{l:reduction}
    The following hold:
    \begin{enumerate}
        \item If there is a vector $x \in \Z_2^n$ such that $Ax = b$ then any 
        bottleneck-optimal cycle in the class $[\zeta]$ is a summation of the 
        cycles $\lambda_j$.
    
        \item If there exists a bottleneck-optimal cycle $z_*$ in the class 
        $[\zeta]$ such that $z_*=\sum_{i=1}^n x_i \lambda_i$ then the vector 
        $x = (x_1, \ldots, x_n)^t$ is a solution to $Ax = b$.
    \end{enumerate}
\end{restatable}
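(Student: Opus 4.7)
The plan is to reduce both statements to a single algebraic fact about $H_1(\mf) \cong \Z_2^n$ in the meridian basis $\{[\mu_1],\ldots,[\mu_n]\}$, whose existence is taken as given in the excerpt. The key ingredient I would invoke is that for any $1$-cycle $\gamma$ lying in the interior of $\mf$ the $i$-th meridian coordinate of $[\gamma]$ equals the $\Z_2$-linking number $\mathrm{lk}_{\Z_2}(\gamma, L_i)$, i.e.\ the parity of the transverse intersections of $\gamma$ with any singular disk bounded by $L_i$ in the ambient $3$-ball. Combined with the defining property of the link diagram, $\mathrm{lk}_{\Z_2}(\lambda_j, L_i) = a_{ij}$ and $\mathrm{lk}_{\Z_2}(\zeta, L_i) = b_i$, this yields $[\lambda_j] = \sum_i a_{ij}[\mu_i]$ and $[\zeta] = \sum_i b_i [\mu_i]$, hence
\[
    \Big[\sum_j x_j \lambda_j\Big] \;=\; \sum_i (Ax)_i \,[\mu_i]
\]
for every $x \in \Z_2^n$. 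Since the meridians form a basis, the right-hand side equals $[\zeta]$ precisely when $Ax = b$.

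Statement~(2) is then immediate: if $z_* = \sum_i x_i \lambda_i$ belongs to $[\zeta]$, the displayed equivalence forces $Ax = b$.

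For statement~(1), I would start from a solution $y$ of $Ay = b$ and use the candidate $\gamma := \sum_j y_j \lambda_j$, which by the previous paragraph lies in $[\zeta]$. Its bottleneck weight is $\wf(e_0)$ for some edge $e_0 \in \lambda_{j_0}$, where $j_0$ is the largest index with $y_{j_0}=1$. By the weight ordering fixed in the reduction, every edge of $\mf$ not contained in $\bigcup_j \lambda_j$ is strictly heavier than $e_0$. Hence any bottleneck-optimal cycle $z_*$ in $[\zeta]$ satisfies $\bn_\wf(z_*) \le \wf(e_0)$ and therefore uses only edges of $\bigcup_j \lambda_j$. Since the $\lambda_j$ are pairwise disjoint simple closed curves in the $1$-skeleton, their union is (as a subcomplex) a disjoint union of $n$ circles, so every $1$-cycle supported there is a $\Z_2$-sum of a subset of the $\lambda_j$. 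Thus $z_*$ has the required form.

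The main obstacle will be cleanly justifying the identification of meridian coordinates with $\Z_2$-linking numbers. This is classical for link complements in $S^3$ and follows from Alexander duality, but since $\mf$ here is obtained by removing open tubular neighborhoods from a $3$-ball rather than from $S^3$, I would prefer a direct argument using the combinatorial singular-disk definition of $\Z_2$-linking number recalled in the Knots and Links subsection: evaluate it on the meridians to confirm $\mathrm{lk}_{\Z_2}(\mu_i, L_j) = \delta_{ij}$, deduce that the pairing with linking numbers gives the meridian coordinates, and then apply it to $\lambda_j$ and $\zeta$, which by construction live in the $1$-skeleton away from the removed neighborhoods and so intersect any such disk transversely.
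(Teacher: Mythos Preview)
Your proposal is correct and follows essentially the same route as the paper: establish $[\lambda_j]=\sum_i a_{ij}[\mu_i]$ and $[\zeta]=\sum_i b_i[\mu_i]$ in the meridian basis, deduce the equivalence $Ax=b \Leftrightarrow [\sum_j x_j\lambda_j]=[\zeta]$, and then use the weight ordering plus disjointness of the $\lambda_j$ for statement~(1). The only cosmetic difference is that the paper justifies the key homology relations by directly exhibiting the bounding $2$-chain (the planar disk of $\lambda_j$, punctured at its intersections with the $L_i$, whose boundary is $\lambda_j+\sum_i a_{ij}\mu_i$), whereas you phrase the same computation via the $\Z_2$-linking pairing; the explicit disk is precisely the singular disk witnessing those linking numbers, so no extra Alexander-duality machinery is needed.
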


\begin{proof}
    First, observe that for the class $[\lambda_j]$ we have 
    $[\lambda_i] = \sum_{j=1}^n a_{ji} [\mu_j]$. The left of 
    Figure~\ref{fig:meridian} depicts a 2-chain realizing this relation. 
    If we map the basis element $[\mu_j]$ to the $j$-th standard basis element 
    $e_j$, then we have defined an isomorphism $H_1(M)\cong \Z_2^n$ in which 
    the class $[\lambda_i]$ maps to the column $A_i$. Second, note that, with 
    a similar argument, $[\zeta] = \sum_{i=1}^n b_j [\mu_j]$, see 
    Figure~\ref{fig:meridian} right. Thus $[\zeta]$ maps to $b$ under the 
    isomorphism. It follows that $Ax = b$ if and only if 
    $\sum x_j [\lambda_j] = [\zeta]$. The second statement follows.
    
    If $x$ is a solution to $Ax=b$, then the cycle $z = \sum x_j \lambda_j$ 
    belongs to the class $[\zeta]$.  Any cycle which is not entirely a subset 
    of the edges of the $\lambda_i$'s, and hence a summation of the 
    $\lambda_i$, contains some edge which is larger than all the edges of the 
    $\lambda_i$'s and therefore has weight more than $z$. It follows that any 
    bottleneck-optimal cycle is a summation of the $\lambda_i$ or, a subset of 
    them, since these are disjoint simple cycles. This proves the first 
    statement.
\end{proof}

\begin{figure}
    \centering
    \includegraphics[width=0.35\textwidth]{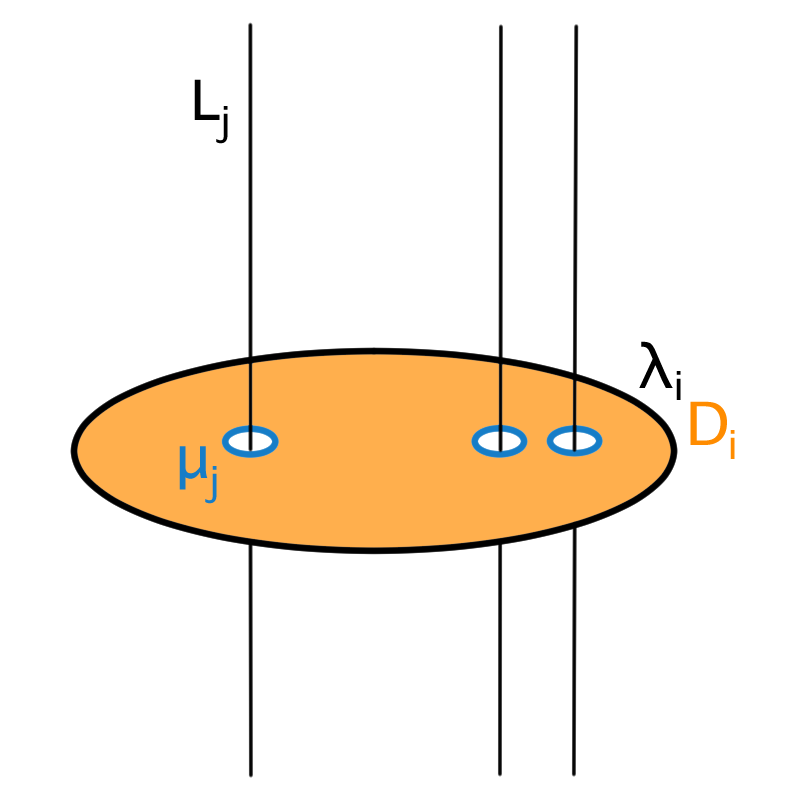}
    \includegraphics[width=0.35\textwidth]{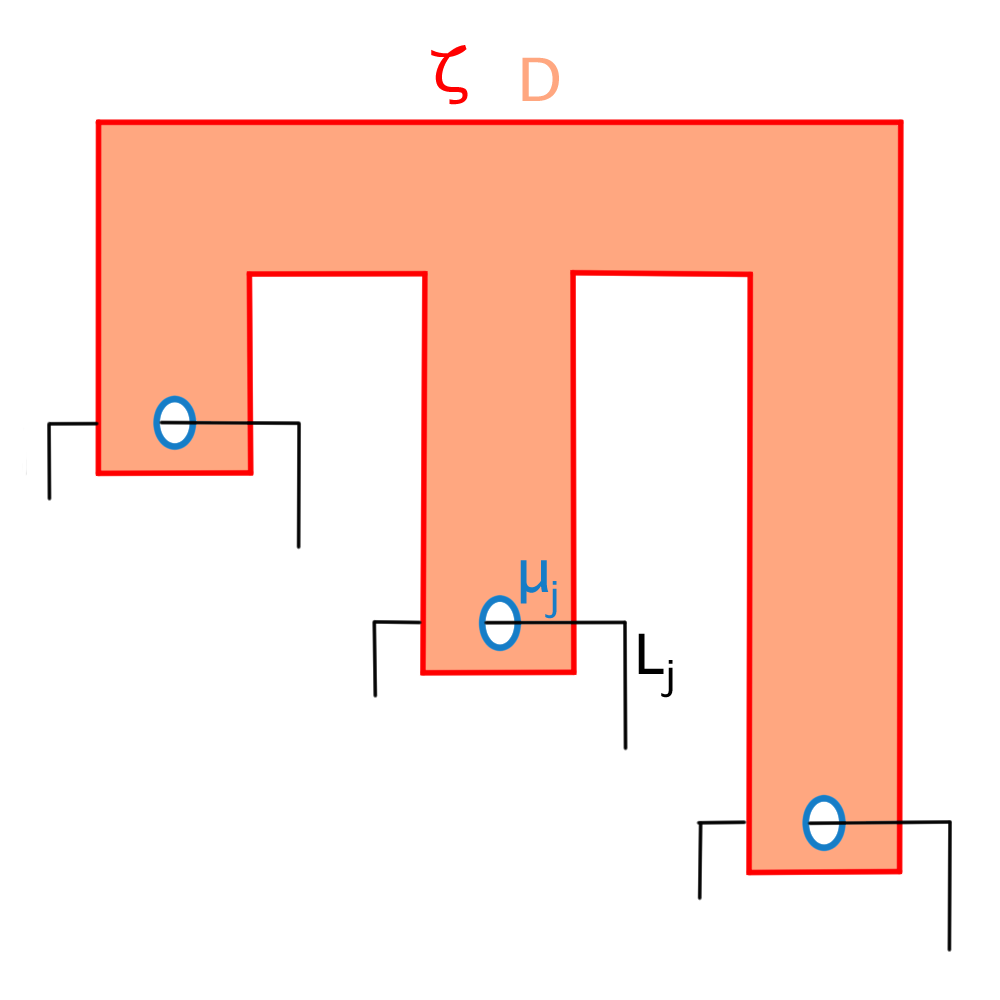}
    \caption{$\bo D_i = \lambda_i + \sum_j a_{ji} \mu_j$ (left), 
    $\bo D = \zeta + \sum_j b_j \mu_j$ (right). }
    %\rewI{In order for Figure 6 to explain something, maybe the authors should say more.}}
    \label{fig:meridian}
\end{figure}

\begin{restatable}{theorem}{sparsematrix}
    Solving the system of equations $Ax = b$ where $A$ is a sparse $\Z_2$-matrix 
    reduces in $O(n^2 \log n)$ time to the bottleneck-optimal homologous cycle 
    problem with $\Z_2$-coefficients for a 3-manifold of size $O(n^2)$ given 
    as a subset of $\R^3$.
\end{restatable}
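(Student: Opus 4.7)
The plan is to assemble the theorem from the three ingredients already developed in this section: the link diagram construction, the realization of that diagram as a spatial link inside a triangulated $3$-ball, and the correctness of the reduction (\cref{l:reduction}). Essentially nothing new needs to be proved — the work is to string the pieces together and verify that the running time telescopes to $O(n^2 \log n)$.

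First, I would describe the reduction as a straight pipeline. Starting from the sparse system $Ax = b$, produce the link diagram $\Ld'$ with the $n$ round circles $\lambda'_i$, the row-knots $L'_i$, and the input-knot $\zeta'$ exactly as described just before \cref{l:reduction}. By the crossing-count lemma (the \textsf{linkcrossings} statement), sparsity of $A$ gives $O(cn^2) = O(n^2)$ crossings in $\Ld'$, and the diagram itself can be written down in time proportional to its size, i.e.\ $O(n^2)$. Next, invoke the Hass--Lagarias--Pippenger construction~\cite{HLP99} to turn $\Ld'$ into a spatial link $\Ld$ sitting in the $1$-skeleton of a triangulated $3$-ball; this costs $O(m \log m)$ where $m = O(n^2)$ is the crossing number, giving $O(n^2 \log n)$ and producing a triangulation of size $O(n^2)$. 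Then take the link complement of the sub-link $\mathcal{N} = \bigcup_i L_i$ to obtain the $3$-manifold $\mf \subset B^3 \subset \R^3$; this is a linear-time operation on the triangulation and preserves the $O(n^2)$ size bound. A constant number of barycentric-style subdivisions inside individual tetrahedra then guarantees that $\zeta$, the $\lambda_i$, and whatever additional edges we need are simple and disjoint on the $1$-skeleton of $\mf$, still in time linear in the current complexity.

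Having produced $(\mf,\zeta)$, I would assign the edge-weights exactly as prescribed above \cref{l:reduction}: all edges not lying on any $\lambda_i$ receive weights strictly larger than any edge on a $\lambda_i$, and within the $\lambda_i$'s the weights are layered so that $i < i'$ forces all edges of $\lambda_i$ to be smaller than all edges of $\lambda_{i'}$. Since each $\lambda_i$ contains $O(1)$ edges on average (there are $O(n^2)$ edges total and $n$ cycles) and we only need a total order, the ordering and hence the weight function can be written down in $O(n^2)$ time. The output of the reduction is then the instance $(\mf, \wf, \zeta)$ of Bottleneck-OHCP.

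Finally, correctness is exactly \cref{l:reduction}: a bottleneck-optimal cycle in $[\zeta]$ must avoid the strictly heavier ``background'' edges and therefore lies entirely in the disjoint simple cycles $\lambda_1,\dots,\lambda_n$, so it has the form $z_* = \sum x_i \lambda_i$ for some $x \in \Z_2^n$; by the lemma this $x$ satisfies $Ax = b$, and conversely the existence of any solution $x$ witnesses that such a subset-cycle lies in $[\zeta]$ and hence is bottleneck-optimal. Extracting $x$ from $z_*$ in the reverse direction takes time linear in $|\mf|$ by simply reading off which $\lambda_i$ appear. Summing the time bounds, the overall reduction runs in $O(n^2 \log n)$. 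The main (and only) place where the bound is non-trivial is the $O(n^2 \log n)$ cost of the spatial realization step; everything else is linear in $n^2$, so no step dominates that. I do not anticipate a real obstacle — the conceptual content of the reduction already lives in \cref{l:reduction}, and the theorem is in effect a complexity-bookkeeping corollary.
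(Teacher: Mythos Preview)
Your approach is the paper's approach: assemble the already-built construction, cite \cref{l:reduction} for correctness, and tally the running time. The pipeline and the bookkeeping are right.

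One small logical slip in your correctness paragraph: you assert that ``a bottleneck-optimal cycle in $[\zeta]$ must avoid the strictly heavier background edges and therefore lies entirely in the $\lambda_i$.'' That is not unconditionally true. It holds only when some $\sum x_i\lambda_i$ actually lies in $[\zeta]$, i.e.\ when $Ax=b$ has a solution; if the system is inconsistent, no combination of the $\lambda_i$ is homologous to $\zeta$ and the bottleneck-optimal cycle will necessarily use a background edge. The paper handles this with an explicit case split on whether $z_*\subset\bigcup_i\lambda_i$: if yes, read off $x$ (part~2 of \cref{l:reduction}); if no, report ``no solution'' (contrapositive of part~1). Your write-up should make that split explicit so the reduction correctly detects infeasibility as well.
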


\begin{proof}
    Given the system $Ax=b$ we have already constructed our instance. If the 
    bottleneck-optimal cycle $z_*$ returned by any algorithm that solves the 
    Bottleneck-OHCP problem uses only edges in $\bigcup \lambda_i$, then the 
    second statement of Lemma~\ref{l:reduction} implies that we can find a 
    solution by determining which $\lambda_i$ appear in $z_*$. This can be done 
    in linear time. On the other hand, if $z_*$ uses some edge not in 
    $\bigcup \lambda_i$, then there is no solution to the system by the first 
    statement of Lemma~\ref{l:reduction}.
\end{proof}

Although we have not defined the integer homology groups, it is almost 
immediate that the above reduction works also with $\Z$-coefficients.

\begin{corollary}
    The (1-dimensional) lex-optimal homologous cycle problem for 3-manifolds in $\R^3$ of size $n^2$ cannot be solved more efficiently than the time required to solve a 
    system of equations $Ax = b$ with $A$ a sparse $n \times n$ matrix, if the 
    latter time is $\Omega(n^2 \log(n))$.
\end{corollary}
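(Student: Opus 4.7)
The plan is to combine the sparse-matrix reduction theorem just proved with the observation (made early in the paper, right after the definitions of the two optimality measures) that a lex-optimal cycle is automatically bottleneck-optimal. In other words, any algorithm that solves Lex-OHCP also solves Bottleneck-OHCP on the same instance with the same asymptotic running time, so the hardness transfers.

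Concretely, I would proceed as follows. Given a system $Ax = b$ with $A$ a sparse $n \times n$ matrix over $\Z_2$, invoke the construction of Theorem~\ref{s:reductions} (the sparse-matrix reduction) to build, in $O(n^2 \log n)$ time, a 3-manifold $\mf \subset \R^3$ of complexity $O(n^2)$, a cycle $\zeta$, and an edge ordering such that every bottleneck-optimal representative of $[\zeta]$ has the form $\sum x_i \lambda_i$ with $x = (x_1,\dots,x_n)^t$ a solution of $Ax=b$ (Lemma~\ref{l:reduction}). Now apply a hypothetical Lex-OHCP algorithm running in time $T(N)$ on a complex of size $N$ to the instance $(\mf,\zeta)$. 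The output $z_\ast$ is the lex-optimal cycle homologous to $\zeta$, and because lex-optimal cycles are in particular bottleneck-optimal, $z_\ast$ is a valid bottleneck-optimal witness. Thus, either $z_\ast \subseteq \bigcup_i \lambda_i$, in which case we read off $x$ in $O(n^2)$ time, or $z_\ast$ uses some edge outside $\bigcup_i \lambda_i$, in which case the first part of Lemma~\ref{l:reduction} certifies that no solution exists.

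This yields an algorithm for the sparse linear system with total running time $O(n^2 \log n) + T(O(n^2))$. If the true complexity of solving $Ax=b$ for sparse $n \times n$ matrices is some function $f(n) = \Omega(n^2 \log n)$, then $T(O(n^2)) + O(n^2 \log n) \geq f(n)$, and since the $O(n^2 \log n)$ overhead is absorbed by $f(n)$, this forces $T(O(n^2)) = \Omega(f(n))$. In other words, any improvement on Lex-OHCP below the sparse-system bound would immediately yield a faster sparse-system solver, which is the statement of the corollary.

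There is essentially no obstacle: the entire content is the observation that lex-optimality implies bottleneck-optimality, together with bookkeeping on the $O(n^2 \log n)$ overhead relative to the assumed $\Omega(n^2 \log n)$ lower bound. The only subtlety worth flagging in writing is to make explicit that the reduction's edge ordering (constructed in the previous section by making all $\lambda_i$-edges precede any other edge in weight) is a legitimate input to a Lex-OHCP solver, and that extracting the coefficients $x_i$ from $z_\ast$ is a trivial linear-time postprocessing step, so none of these auxiliary operations push the cost above $O(n^2 \log n)$.
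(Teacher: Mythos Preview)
Your proposal is correct and matches the paper's intended argument. The paper in fact states this corollary without an explicit proof, treating it as immediate from the preceding sparse-matrix reduction theorem together with the earlier observation that every lex-optimal cycle is automatically bottleneck-optimal; your write-up supplies exactly that reasoning, with the appropriate bookkeeping on the $O(n^2\log n)$ overhead.
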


As noted before, the persistent boundary reduction algorithm can compute a 
lex-optimal cycle in $O(lm^2)$ time~\cite{cohensteiner2020}, where $m$ is the 
number of $d+1$-simplices and $l$ is the number of $d$-simplices. Although a set 
of persistent generators can be computed in matrix multiplication 
time~$\cite{MMS11}$, we do not know that the lex-optimal cycle can be found in 
matrix multiplication time, as it is unclear if the divide and conquer strategy 
from $\cite{MMS11}$ would work on our problem.

\begin{restatable}{corollary}{mmtime}
    A set of sub-level-set persistent homology generators for a 3-manifold $M$ 
    or a 2-complex $\cpx$ of size $n^2$ in $\R^3$ and a generic simplex-wise 
    linear function $f: M \rightarrow \R$ cannot be computed more efficiently 
    than the time required to compute a maximal set of independent columns in 
    an $n \times n$ sparse matrix $A$, if the latter time is 
    $\Omega(n^2 \log(n))$.
\end{restatable}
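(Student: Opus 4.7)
The plan is to reduce the task of finding a maximal linearly independent set of columns of a sparse $n \times n$ $\Z_2$-matrix $A$ to computing sub-level-set persistent $1$-homology generators on a 3-manifold (resp.\ 2-complex) of size $O(n^2)$ in $\R^3$. Given such an $A$, I would first invoke the construction of Section~\ref{s:reductions} to build $M$ in $\R^3$ of size $O(n^2)$ in $O(n^2 \log n)$ time, but omitting the extra cycle $\zeta$ since no specific linear system needs to be solved; what remains are the disjoint simple cycles $\lambda_1, \ldots, \lambda_n$ in the 1-skeleton of $M$, together with the isomorphism $H_1(M) \cong \Z_2^n$ sending $[\lambda_i]$ to the column $A_i$ (the content of Lemma~\ref{l:reduction}). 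For the 2-complex version of the statement, one replaces $M$ by its 2-skeleton, which has the same $H_1$ and still embeds in $\R^3$.

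Next, I would order the edges of $M$ so that the edges of $\lambda_1$ are smallest, then those of $\lambda_2$, and so on, with every edge not lying in any $\lambda_i$ larger than every $\lambda_i$-edge. Lifting this edge weight to a generic simplex-wise linear function $\gf$ on the barycentric subdivision $M'$ via Lemma~\ref{l:funcbottle} preserves the $1$-dimensional bottleneck order while producing a function of the form required by the assumed algorithm. I would then feed $(M', \gf)$ to the assumed generator algorithm and collect its output of p-representatives with their birth/death indices.

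The correctness argument proceeds as follows. In the induced filtration, the only $1$-classes born during the ``$\lambda$-phase'' (the phase in which the subdivided $\lambda_i$ get completed) are the $[\lambda_i]$: when the last edge of $\lambda_i$ enters, a single new class is born whose p-representative, with the lex-minimal choice of $b$ described in Section~\ref{ssec:persistentSolution}, is $\lambda_i$ itself. In the subsequent ``capping-off'' phase, no new $1$-classes are born, but the additional $2$-simplices enforce relations that hold in $H_1(M)$. Whenever a relation $\sum_{j \in S} [\lambda_j] = 0$ in $H_1(M)$ becomes enforced, the Elder Rule kills the p-representative with the largest index in $S$. A direct induction on $i$ then shows that the p-representative of $\lambda_i$ survives if and only if $[\lambda_i] \notin \mathrm{span}\{[\lambda_j] : j < i\}$ in $H_1(M)$, equivalently if and only if the column $A_i$ is independent of $A_1, \ldots, A_{i-1}$. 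The surviving indices therefore index a greedy maximal set of independent columns of $A$, which can be read off from the birth/death output in $O(n)$ additional time.

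The main obstacle will be the subdivision bookkeeping: one must examine the small positive weights assigned to the auxiliary vertices of $M'$ in the proof of Lemma~\ref{l:funcbottle} and verify that in the sub-level-set filtration of $\gf$, completing the subdivided $\lambda_i$ triggers exactly one birth event in dimension $1$ with p-representative in $[\lambda_i]$, with no spurious classes interfering between successive $\lambda_i$-completions. Once this is settled, combining the $O(n^2 \log n)$ construction with the assumed $T(n^2)$-time generator algorithm and the $O(n)$ post-processing yields the stated reduction: any algorithm for persistent generators running faster than the best sparse-matrix independent-columns algorithm (and faster than $\Omega(n^2 \log n)$) would contradict the latter's lower bound.
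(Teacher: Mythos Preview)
Your proposal is correct and follows essentially the same approach as the paper: build $M$ via the reduction so that $[\lambda_i]\mapsto A_i$, use Lemma~\ref{l:funcbottle} to pass to a simplex-wise linear function on the subdivision, and read off which $\lambda_i$ survive as persistent generators to obtain a maximal independent set of columns; for the 2-complex case, take the 2-skeleton. Your write-up is in fact more detailed than the paper's---you spell out the Elder-Rule induction and flag the subdivision bookkeeping, both of which the paper leaves implicit.
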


 \begin{proof}
     Recall that in our reduction above, the cycles $\lambda_i$ correspond to 
     columns of the matrix $A$. After a subdivision, we can define a 
     simplex-wise linear function such that bottleneck weight function equals 
     the sub-level set bottleneck weight function, by Lemma~\ref{l:funcbottle}. 
     With respect to this function, the (subdivision of the) cycles $\lambda_i$ 
     are the first cycles in their homology classes that appear in a sub-level 
     set. Only an independent set of these cycles will remain until the end. 
     These cycles therefore determine a subset of the columns of $A$ which are 
     independent.
    
     For the 2-complex $\cpx$, we can simply take the 2-skeleton of the manifold 
     in the reduction.
     There will be no change in 1-dimensional persistent homology groups.
\end{proof}

As noted in the introduction, the above results are in a strong contrast with 
the results of Dey~\cite{Dey2019}. In other words, if the complex is of size 
$O(n^2)$ and the given function on the simplicial complex $\cpx$ is a height 
function then one can compute the generators in $O(n^2 \log n)$ time, whereas, 
for a general function, one cannot do better than computing a maximal set of 
independent columns for a given sparse matrix $A$ of size $n$. To the best of 
our knowledge, the best deterministic algorithm for this operation takes at 
least $O(n^\omega)$ time, where $\omega$ is the exponent of matrix 
multiplication.

\begin{restatable}{corollary}{threemanifold}
    The persistence diagram for a 2-complex or a 3-manifold of size $n^2$ in 
    $\R^3$ and a generic simplex-wise linear function $f: |\cpx| \rightarrow \R$ 
    cannot be computed more efficiently than the time required to compute the 
    rank of a sparse $n \times n$ matrix $A$,  if the latter time is 
    $\Omega(n^2 \log(n))$. 
\end{restatable}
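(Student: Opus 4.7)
The plan is to recycle the reduction from Section~\ref{s:reductions} verbatim and extract $\mathrm{rank}(A)$ as a single persistent Betti number read off the persistence diagram.

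First, from a sparse $n \times n$ $\Z_2$-matrix $A$ I would build, in $O(n^2 \log n)$ time, the same 3-manifold $M \subset \R^3$ (or its 2-skeleton in the 2-complex case) of size $O(n^2)$ as in the construction preceding Lemma~\ref{l:reduction}, together with the edge ordering in which the edges of $\lambda_1,\dots,\lambda_n$ come first, in that order, and all other edges come afterward. I would then turn this ordering into a generic simplex-wise linear function $f \colon |M| \to \R$ using Lemma~\ref{l:funcbottle} on a barycentric subdivision. Let $r$ be a threshold value of $f$ immediately after the last edge of $\bigcup_j \lambda_j$ has entered the sub-level set.

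Next, I would identify the persistent Betti number that encodes $\mathrm{rank}(A)$. Since the triangulation is already chosen so that the cycles $\lambda_j$ are pairwise disjoint simple closed curves in the $1$-skeleton, the sub-level set $|M|_{\leq r}$ is a disjoint union of isolated vertices with the $n$ circles $\lambda_1,\dots,\lambda_n$. Hence $H_1(|M|_{\leq r}) \cong \Z_2^n$ is freely generated by $[\lambda_1],\dots,[\lambda_n]$. By the calculation in the proof of Lemma~\ref{l:reduction}, the inclusion-induced map $H_1(|M|_{\leq r}) \to H_1(M)$ sends $[\lambda_j]$ to the $j$-th column $A_j$ under the identification $H_1(M) \cong \Z_2^n$ given by the meridian basis, and therefore has rank exactly $\mathrm{rank}(A)$.

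Finally, the rank of this inclusion-induced map is, by definition, the persistent Betti number $\beta_1^{r,\infty}$ of the filtration, which equals the number of essential (infinite) $1$-dimensional bars of $f$ whose birth occurs at parameter $\leq r$. Given any algorithm that outputs the persistence diagram of $(|M|,f)$ in time $T(n^2)$, I would scan the bars in linear time and return this count as $\mathrm{rank}(A)$, so the overall procedure takes $T(n^2) + O(n^2 \log n)$ time; the stated lower bound then follows as in the previous corollary. There is no real obstacle beyond what is already packaged in Lemma~\ref{l:reduction} and Lemma~\ref{l:funcbottle}; the only subtlety worth flagging is the observation that, although the persistence diagram is a global object, the single number $\beta_1^{r,\infty}$ we need is directly readable from it without any additional computation on the filtration itself.
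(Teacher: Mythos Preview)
Your overall strategy is the same as the paper's: read off $\mathrm{rank}(A)$ from the persistence diagram by counting the essential $1$-dimensional bars whose birth occurs before the threshold separating the $\lambda_i$-edges from all other edges. The paper's proof says exactly this, just more tersely and leaning on the argument already given for the preceding corollary.

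There is, however, a genuine gap in your justification. Your claim that the sub-level set $|M|_{\leq r}$ is ``a disjoint union of isolated vertices with the $n$ circles $\lambda_1,\dots,\lambda_n$'' is not correct for the function produced by Lemma~\ref{l:funcbottle}. In that construction, \emph{every} vertex of the barycentric subdivision that is not the barycenter of an edge is assigned a very small value; hence at level $r$ the sub-level set already contains all original vertices, all triangle and tetrahedron barycenters, and every simplex of the subdivision spanned by these (for instance, the $2$-simplex on a chain $\text{vertex}\subset\text{triangle}\subset\text{tetrahedron}$). The sub-level set is therefore a complicated, highly connected object, and there is no reason for $H_1(|M|_{\leq r})$ to be freely generated by the $[\lambda_j]$ alone. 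Consequently, your sentence ``and therefore has rank exactly $\mathrm{rank}(A)$'' does not follow from what precedes it: you have shown the image of $H_1(|M|_{\leq r})\to H_1(M)$ \emph{contains} $\mathrm{col}(A)$, but not the reverse inclusion.

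The fix is to use the \emph{conclusion} of Lemma~\ref{l:funcbottle}, not just its construction. If $h\in H_1(M)$ lies in the image of $H_1(|M|_{\leq r})\to H_1(M)$, then by definition $\bn^*_\gf(h')\leq r$, and Lemma~\ref{l:funcbottle} gives $\bn^*_\wf(h)\leq r$. Thus $h$ has a simplicial representative using only edges of weight at most $r$, i.e.\ only edges of $\bigcup_j\lambda_j$; since the $\lambda_j$ are disjoint simple cycles, such a representative is a sum $\sum x_j\lambda_j$, whence $h\in\mathrm{col}(A)$. This is precisely what the paper is invoking (via the proof of the previous corollary) when it says a bar ``represents some $\lambda_i$ if and only if the birth time is smaller than the weight of edges not contained in these cycles.''
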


 \begin{proof}
     The number of 1-dimensional points with infinite death times and which 
     correspond to the $\lambda_i$ is the number of independent columns of $A$. 
     We can determine if a point belongs to a $\lambda_i$ by checking its birth 
     time. The point represents some $\lambda_i$ if and only if the birth time 
     is smaller than the weight of edges not contained in these cycles.   
 \end{proof}

Again the above theorem should be compared with results of \cite{Dey2019}, where 
the persistence is computed in $O(n^2 \log n)$ time for a 2-complex in 3-space 
of size $n^2$ and a height function.

%%%%%%%%%%%%%%%%%%%%%%%%%%%%%%%% BIBLIOGRAPHY %%%%%%%%%%%%%%%%%%%%%%%%%%%%%%%%%

\bibliography{bib}

\end{document}